\newcommand{\argmax}[1]{\underset{#1}{\text{argmax}}}
\newcommand{\argmin}[1]{\underset{#1}{\text{argmin}}}
\DeclarePairedDelimiter{\ceil}{\lceil}{\rceil}
\DeclarePairedDelimiter{\floor}{\lfloor}{\rfloor}
\begin{document}
	\title{Robust Monotone Submodular Function Maximization}
		\author{James B. Orlin \and Andreas S. Schulz \and Rajan Udwani}
		\institute{ \{jorlin,schulz,rudwani@mit.edu\}}
	
	\date{\vspace{-5ex}}	
	\maketitle

	\begin{abstract} 
		\vspace{-3mm}
		
		We consider a robust formulation, introduced by Krause et al. (2008), of the classical cardinality constrained monotone submodular function maximization problem, and give the first constant factor approximation results. The robustness considered is w.r.t.\ adversarial removal of up to $\tau$ elements from the chosen set. For the fundamental case of $\tau=1$, we give a deterministic $(1-1/e)-1/\Theta(m)$ approximation algorithm, where $m$ is an input parameter and number of queries scale as $O(n^{m+1})$. In the process, we develop a deterministic $(1-1/e)-1/\Theta(m)$ approximate greedy algorithm for bi-objective maximization of (two) monotone submodular functions. Generalizing the ideas and using a result from Chekuri et al. (2010), we show a randomized $(1-1/e)-\epsilon$ approximation for constant $\tau$ and $\epsilon\leq \frac{1}{\tilde{\Omega}(\tau)}$, making $O(n^{1/\epsilon^3})$ queries. Further, for $\tau\ll \sqrt{k}$, we give a fast and practical 0.387 algorithm. Finally, we also give a black box result result for the much more general setting of robust maximization subject to an Independence System.
		
	\end{abstract}
	
	\newtheorem{lem}{Lemma}
	\newtheorem{coro}[lem]{Corollary}
	\newtheorem{thm}[lem]{Theorem}
	\newtheorem{cnj}[lem]{Conjecture}
	\section{Introduction}
	A set function $f:2^N\rightarrow \mathbb{R}$ on the ground set $N$ 
	is called submodular if,
	$$f(A+a)-f(A)\leq f(B+a)-f(B) \text{  for all } B\subseteq A\subseteq N \text{ and } a\in N\setminus A .$$
	The function is monotone if $f(B)\leq f(A) \text{ for all } B\subseteq A$. We also impose $f(\emptyset)=0$, which combined with monotonicity implies non-negativity. Optimization problems with submodular objective functions have received a lot of interest due to several applications where instances of these problems arise naturally. However, unlike the (unconstrained) minimization of submodular functions, for which polytime algorithms exist \cite{min1,min2}, even the simplest maximization versions are NP-hard \cite{hard,hard1,hard2,feige}. In fact, they encompass several fundamental hard problems, such as max-cut, max-$k$-coverage, max-dicut and variations of max-SAT and max-welfare. 
	
	A long line of beautiful work has culminated in fast and tight approximation algorithms for many settings of the problem. As an example, for unconstrained maximization of non-monotone submodular functions, Feige et al.\ in \cite{feige}, provided an algorithm with approximation ratio of 0.4 and showed an inapproximability threshold of 1/2 in the value-oracle model.  Extensions by Gharan and Vondr\'ak \cite{gharan} and subsequently by Feldman et al.\ \cite{feld} led to further improvement of the guarantee (roughly 0.41 and 0.42, respectively). Finally, Buchbinder et al.\ in \cite{buch} gave a tight randomized $1/2$ approximation algorithm, and this was recently derandomized  \cite{deran}.
	
	Here we are interested in the problem of maximizing a monotone submodular function subject to a cardinality constraint, written as: 
	$P1:= \underset{A\subseteq N, |A|\leq k}{\max} f(A) .$
	The problem has been well studied and instances of $P1$ arise in several important applications, two of them being:
	
	\noindent \textbf{Sensor Placement} \cite{sense1,sense2,sense3,main}: Given a large number of locations $N$, we would like to place up to $k$ sensors at certain locations so as to maximize the \textit{coverage}. Many commonly used coverage functions measure the cumulative information gathered in some form, and are thus monotone (more sensors is better) and submodular (decreasing marginal benefit of adding a new sensor). 
	
	However, as highlighted in \cite{main}, it is important to ask what happens if some sensors were to fail. Will the remaining sensors have good coverage regardless of which sensors failed, or is a small crucial subset responsible for most of the coverage?  
	
	\noindent \textbf{Feature Selection} \cite{feature1,feature2,ml,main}: In many machine learning models, adding a new feature to an existing set of features always improves the modeling power (monotonicity) and the marginal benefit of adding a new feature decreases as we consider larger sets (submodularity).  Given a large set of features, we would like to select a small subset such that, we reduce the problem dimensionality while retaining most of the information. 
	
	However, as discussed in \cite{ml,main}, in situations where the nature of underlying data is uncertain, leading to non-stationary feature distributions, it is important to not have too much dependence on a few features. Taking a concrete example from \cite{ml}, in document classification, features may take not standard values due to small sample effects or in fact, the test and training data may come from different distributions. In other cases, a feature may even be deleted at some point, due to input sensors failures for example. Thus, similar questions 
	arise here too and we would like to have an `evenly spread' dependence on the set of chosen features. With such issues in mind, consider the following \textit{robust} variant of the problem, introduced in \cite{main},
	
	$$P2:= \underset{A\subseteq N, |A|\leq k}{\max}\quad \underset{Z\subseteq A, |Z|\leq \tau}{\min} \, f(A-Z) .$$
	
	Note that the parameter $\tau$ controls the degree of robustness of the chosen set since the larger $\tau$ is, the larger the size of subset $Z$ that can be adversarially removed from the chosen set $A$. For $\tau=0$, $P2$ reduces to the $P1$. Since this formulation optimizes the worst case scenario, a natural variation is to optimize the average case failure scenario \cite{krau}. However, this is not suitable for some applications. For instance, we may have no prior on the failure/deletion mechanism and furthermore, in critical applications, such as sensor placement for outbreak detection \cite{main,sense3}, we want protection against the worst case.
	This form of worst case analysis has been of great interest in operations research and beyond, under the umbrella of \textit{robust optimization} (e.g., \cite{book,bert,sim,bertsim}). The idea is to formulate the uncertainty in model parameters through a deterministic \textit{uncertainty set}. 
	While much work in this area assumes that the uncertainty set is a connected, if not convex set, the uncertainty set in $P2$, when $\tau=1$ for instance, is the disconnected set of canonical unit vectors $e_i\in \mathbb{R}^{N}$ (1 at entry $i$, 0 otherwise).
	
	\paragraph{Previous work on $P1$ and $P2$.} The first rigorous analysis of $P1$ was by Nemhauser et al.\ \cite{nem,nem1} in the late 70's, where they showed that the greedy algorithm gives a guarantee of $(1-1/e)$ and that this is best possible in the value-oracle model. Later, Feige \cite{hard} showed that this is also the best possible under standard complexity assumptions (through the special case of Max-$k$-cover). 
	On the algorithmic side, Badanidiyuru and Vondr\'ak \cite{fast} recently gave a faster algorithm for $P1$ that improved the quadratic query complexity of the classical greedy algorithm to nearly linear complexity, by trading off on the approximation guarantee. However, the optimality (w.r.t.\ approximation guarantee) of the greedy algorithm is rather coincidental, and for many complex settings of the problem (monotone or not), the greedy algorithm tends to be sub-optimal (there are exceptions, like \cite{denko}). An approach first explored by Calinescu et al.\ \cite{vond1}, that has been very effective, is to perform optimization on the \textit{multilinear extension} of the submodular function, followed by clever \textit{rounding} to get a solution to the original problem. 
	Based on this framework, tremendous progress has been made over the last decade for both monotone and non-monotone versions with various kinds of constraints \cite{vond1,welf,multiuni,hard1,vond,swap}. In fact, a general framework for establishing hardness of many of these variants \cite{hard1,hard2}, also relies intricately on properties of this relaxation.
	
	Moving on to $P2$, as we will see, the well known greedy algorithm and also the above mentioned \textit{continuous} greedy approach for $\tau=0$, can be arbitrarily bad even for $\tau=1$. In fact, many natural approaches do not have a constant factor guarantee for $\tau\geq 1$. The paper by Krause et al.\ \cite{main}, which formally introduced the problem, actually gives a bi-criterion approximation to the much more general and inapproximable problem:
	$\underset{A\subseteq N, |A|\leq k}{\max} \quad \underset{i\in\{1,2,\dots,m\}}{\min}\, f_i(A)$,
	where $f_i(.)$ is monotone submodular for every $i$. Their algorithm, which is based on the greedy algorithm for $P1$, when specialized to $P2$, guarantees optimality by allowing sets up to size $k(1+\Theta (\log (\tau k\log n)))$ and has runtime exponential in $\tau$. To the best of our knowledge, no stronger/constant factor approximation results were known for $P2$ prior to our work. 
	\vspace{-2mm}
	\paragraph{Our contributions:} We work in the value oracle model
	and give constant factor guarantees for $P2$ with combinatorial, `greedy like' algorithms. To ease presentation, we will usually ignore factors of the form $\big(1-\frac{O(1)}{k}\big)$ in the approximation guarantees and thus, most of the results presented here are asymptotic in $k$. Our initial focus is on a restricted case, where we construct relatively simple algorithms and get insights that generalize. For this special case, we give a $(1-1/e)$ algorithm for $\tau=o(k)$. 
	
	In the non-restricted setting, we first focus on generalizing results for $\tau=1$, 
	for which we propose a fast and practical 0.5547 approximation and later an asymptotically $(1-1/e)-1/\Theta(m)$ algorithm (with runtime exponential in $m$, which is an input parameter). This relies on developing a new $(1-1/e)-1/\Theta(m)$ approximate greedy algorithm for the bi-objective maximization of monotone submodular functions subject to cardinality constraint, of which the problem of maximizing the minimum of two arbitrary monotone submodular functions is a special case. We conjecture that the greedy algorithm can be generalized to work for multi-objective maximization of a fixed number of monotone submodular functions. There has been previous work on the multi-objective problem for a constant number of monotone submodular functions and a randomized $(1-1/e)-\epsilon$ algorithm for (a much more general version of) this problem was given in \cite{swap}. It uses the continuous greedy algorithm from \cite{vond}, along with an innovative dependent rounding scheme called \emph{swap rounding}, and we use this algorithm as a subroutine to get a randomized $(1-1/e)-\epsilon$ approximation for the robust problem when $\tau$ is constant, however the runtime scales as $O(n^{1/\epsilon^3})$. Finally, for $\tau=o\big(\sqrt{\frac{k}{c(k)}}\big)$, we give a fast $0.387\big(1-\frac{1}{\Theta(c(k))}\big)$ algorithm where $c(k)\xrightarrow{k\to \infty}\infty$ is an input parameter that governs the trade off between how large $\tau$ can be and how fast the guarantee converges to 0.387.
	
	In the more general case, where we wish to find a robust set $A$ in an independence system, we extend some of the ideas from the cardinality case into an enumerative procedure that yields an $\alpha/(\tau+1)$ approximation using an $\alpha$ approximation algorithm for $\tau=0$ as a subroutine. However, the runtime scales as $n^{\tau+1}$.  
	

	
	The outline for the rest of the paper is as follows: In Section 2, we introduce some notation and see how several natural ideas fail to give any approximation guarantees. In Section 3, we start with a special case and slowly build up to an algorithm with asymptotic guarantee $(1-1/e)-\epsilon$ for constant $\tau$, covering the other results in the process. Finally, in Section 4, we extend some of the ideas to more general constraints. Section 5 concludes with some open questions. 
	
	\section{Preliminaries}
	\subsection{Definitions} 
	We denote an instance of $P2$ on ground set $N$ with cardinality constraint parameter $k$ and robustness parameter $\tau$ by $(k,N,\tau)$. Subsequently, we use $OPT(k,N,\tau)$ to denote an optimal set for the instance $(k,N,\tau)$. For any given set $A$, we call a subset $Z_{\tau}$ a \textit{minimizer} if $f(A-Z_{\tau})= \underset{B\subseteq A; |B|= \tau}{\min}f(A-B)$. Also, let $\mathcal{Z}_{\tau}(A)$ be the set of minimizers of $A$. When $\tau=1$, we often use the letter $z$ for minimizers and when $\tau$ is otherwise clear from the context and fixed during the discussion we use the shorthand $Z,\mathcal{Z}(.)$. 
	Based on this we also introduce a key function $g_{\tau}(A)=f(A-Z_{\tau})$. Again, we simply use $g(.)$, when $\tau$ is clear from context. We generally refer to singleton sets without braces $\{\}$ and use $+,-$ and $\cup, \backslash$ interchangeably. Also, define the marginal increase in value due to a set $X$, when added to the set $A$ as $f(X|A)=f(A\cup X)-f(A)$. Similarly, $g(X|A)=g(A\cup X)-g(A)$.
	
	Let $\beta(\eta,\alpha)=\frac{e^\alpha-1}{e^\alpha-\eta}\in [0,1]$ for $\eta \in[0,1],\, \alpha\geq 0$. Note that $\beta(0,1)=(1-1/e)$. This function appears naturally in our analysis and will be useful for expressing approximation guarantees of the algorithms. Next, recall the widely popular greedy algorithm for $P1$:
	
	{\centering
		\vspace{-4mm}
		\begin{minipage}{\textwidth}
			\begin{algorithm}[H]
				\caption{Greedy Algorithm}
				\label{greedy}
				\begin{algorithmic}[1]
					\State Initialize $A=\emptyset$
					\While{$|A|< k$} $A=A+\argmax{x\in N-A}f(x|A)$ \EndWhile
					\vspace{-3mm}
					\State Output: $A$
				\end{algorithmic}
			\end{algorithm}
		\end{minipage}
	}
	
	Let $k=n$ in the above and denote the $i$-th element added by $a_i$. 
	Using this we index the elements in $N$ in the order they were added, so $N=\{a_1,a_2,\dots,a_n\}$.
	
	Also, recall the following theorem:
	\begin{lem}[Nemhauser, Wolsey \cite{nem,nem1}]\label{nem}
	 For all $\alpha\geq 0$, greedy algorithm terminated after $\alpha k$ steps yields a set $A$ with $f(A)\geq \beta(0,\alpha)f(OPT(k,N,0))$.
	\end{lem}
	For the sake of completeness, we include the proof in Appendix \ref{app1}.
	In addition, the following lemma, which compares the optimal value of $(k,N,\tau)$ with $(k-\tau,N,0)$ will be very useful:
	\begin{lem} \label{f and g} 
		\label{fng}
		For instances $(k,N,\tau)$, we have: $$g_{\tau}(OPT(k,N,\tau))\leq f(OPT(k-\tau,N-X,0))\leq f(OPT(k-\tau,N,0)),\quad $$ $\text{for all } X\subseteq N, |X|\leq \tau $.
	\end{lem}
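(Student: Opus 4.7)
The plan is to exhibit, for an arbitrary $X \subseteq N$ with $|X| \leq \tau$, a subset $Z$ of $A^* := OPT(k,N,\tau)$ with $|Z| = \tau$ such that $A^* - Z$ is a feasible solution to the non-robust instance $(k-\tau, N-X, 0)$. Once such a $Z$ is in hand, the lemma will follow from chaining three elementary inequalities, namely $g_\tau(A^*) \leq f(A^* - Z) \leq f(OPT(k-\tau, N-X, 0)) \leq f(OPT(k-\tau, N, 0))$.

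First, I would set $Y := A^* \cap X$ and observe that $|Y| \leq |X| \leq \tau$. Assuming $|A^*| \geq \tau$ (otherwise $g_\tau(A^*)$ is trivially $0$ and the inequality holds by non-negativity of $f$), I can extend $Y$ to a set $Z \subseteq A^*$ with $|Z| = \tau$ by adjoining arbitrary elements of $A^* \setminus Y$. By construction $A^* - Z$ is disjoint from $X$, so $A^* - Z \subseteq N - X$, and its cardinality is $|A^*| - |Z| \leq k - \tau$. Hence $A^* - Z$ is indeed a feasible solution to $(k-\tau, N-X, 0)$.

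Second, I would justify the three inequalities in turn. The first, $g_\tau(A^*) \leq f(A^* - Z)$, is immediate from the definition of $g_\tau(A^*)$ as the minimum of $f(A^* - Z')$ over subsets $Z' \subseteq A^*$ of size $\tau$. The second, $f(A^* - Z) \leq f(OPT(k-\tau, N-X, 0))$, follows because $A^* - Z$ is a feasible but not necessarily optimal solution for $(k-\tau, N-X, 0)$. The third, $f(OPT(k-\tau, N-X, 0)) \leq f(OPT(k-\tau, N, 0))$, holds because $OPT(k-\tau, N-X, 0)$ is also a feasible solution for the less restricted instance $(k-\tau, N, 0)$ and therefore cannot exceed its optimum.

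The only place where any care is needed is the construction of $Z$, where one must simultaneously cover $A^* \cap X$ and hit the exact cardinality $\tau$. However, this is a routine set-theoretic maneuver once one notes $|A^* \cap X| \leq \tau \leq |A^*|$, so I do not anticipate any genuine obstacle; the entire argument is essentially a feasibility swap packaged with monotonicity of $OPT$ in the ground set.
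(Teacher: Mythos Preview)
Your proof is correct and follows essentially the same idea as the paper: both arguments exhibit a subset of $OPT(k,N,\tau)$ of size $k-\tau$ that avoids $X$ and use it as a feasible witness for the instance $(k-\tau,N-X,0)$. Your presentation is in fact a bit more direct than the paper's, which routes the same feasibility swap through the intermediate quantities $g_{\tau-x}(OPT(k,N,\tau)-X)$ and $g_{\tau-x}(OPT(k-x,N-X,\tau-x))$ rather than constructing the set $Z\supseteq A^*\cap X$ explicitly.
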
 
	\begin{proof}
		
		We focus on the first inequality since the second follows by definition.
		Let $x=\big|X\cap OPT(k,N,\tau)\big|\leq \tau$, then note that $g_{\tau}(OPT(k,N,\tau))\leq g_{\tau-x}(OPT(k,N,\tau)-X)$, since the RHS represents the value of some subset of $OPT(k,N,\tau)$ of size $k-\tau$, which upper bounds the LHS by definition. Now, $g_{\tau-x}(OPT(k,N,\tau)-X)\leq g_{\tau-x}(OPT(k-x,N-X,\tau-x))$ by definition. Finally, note that $g_{\tau-x}(OPT(k-x,N-X,\tau-x))\leq f(OPT(k-\tau,N-X,0))$ since the LHS represents the value of a set of size $k-x-(\tau-x)=k-\tau$ that does not include any element in $X$, giving us the desired.\qed
		%
	\end{proof}  
	Finally, it is natural to expect that we cannot approximate $P2$ better than $P1$ (which is approximable up to a factor of $\beta(0,1)$) and this is indeed true, as stated below. The proof is included in Appendix \ref{app1.1}. 
		\begin{lem}
		There exists no polytime algorithm with approximation ratio greater than $(1-1/e)$ for $P2$ unless $P=NP$. For the value oracle model, we have the same threshold, but for algorithms that make only a polynomial number of queries.
		\end{lem}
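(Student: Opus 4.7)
The plan is to observe that $P2$ strictly contains $P1$ as the special case $\tau=0$: when no element can be adversarially removed, $g_{0}(A)=f(A)$ and the inner minimum is vacuous, so the instance $(k,N,0)$ of $P2$ is identical to the instance $(k,N)$ of $P1$. Consequently any $\rho$-approximation algorithm for $P2$ specializes, on inputs of the form $(k,N,0)$, to a $\rho$-approximation algorithm for $P1$ with the same runtime (and the same number of value queries, since the oracle for $f$ on $P1$ is exactly the oracle queried by the algorithm when $\tau=0$).

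Given this reduction, I would simply invoke the two known hardness results for $P1$. For the complexity-theoretic statement, Feige's $(1-1/e)$ inapproximability of Max-$k$-coverage \cite{hard} shows that no polytime algorithm can approximate $P1$ better than $1-1/e$ unless $P=NP$, and this transfers verbatim to $P2$ by the above reduction. For the value-oracle statement, I would cite the information-theoretic lower bound of Nemhauser and Wolsey \cite{nem,nem1}, which says that achieving a ratio better than $1-1/e$ for $P1$ in the value-oracle model requires a superpolynomial number of queries; the same reduction (which makes the same queries on $\tau=0$ instances) then yields the corresponding bound for $P2$.

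There is essentially no obstacle here beyond making the reduction explicit: the only point worth a sentence is that restricting a $P2$-algorithm to $\tau=0$ inputs is truly a black-box reduction that preserves both runtime and query count, so both the NP-hardness and the value-oracle lower bound pass through without any blow-up. Thus the proof is a two-line specialization argument combined with citations to \cite{hard} and \cite{nem,nem1}.
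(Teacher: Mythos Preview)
Your argument is technically correct for the literal statement as written: since $P2$ with $\tau=0$ coincides with $P1$, any algorithm that $\rho$-approximates all instances of $P2$ must in particular $\rho$-approximate $P1$, whence $\rho \le 1-1/e$ by \cite{hard,nem,nem1}. However, the paper proves something strictly stronger and more relevant to its later sections: that the $(1-1/e)$ threshold holds for every \emph{fixed} $\tau \ge 1$, not merely for the family of all $\tau$ that happens to include the degenerate case $\tau=0$. This matters because the positive results in the paper are stated for specific regimes of $\tau$ (e.g., $\tau=1$, constant $\tau$, $\tau=o(\sqrt{k})$), and one wants to know that those specific robust problems are tight at $1-1/e$; your specialization says nothing about an algorithm tailored exclusively to, say, $\tau=1$.

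The paper's reduction is accordingly more substantive. Given an instance $(k,N,0)$ of $P1$ and any target $\tau$, it builds an instance $(k+\tau, N\cup X, \tau)$ of $P2$ by adjoining $\tau$ dummy elements $X=\{x_1,\dots,x_\tau\}$ with $f(x_i)=(k+1)f(a_1)$ and full marginal on every set not containing $x_i$. The dummies are so valuable that any minimizer of a size-$(k+\tau)$ set must remove all dummies present, which forces $g_\tau(OPT(k+\tau,N\cup X,\tau))=f(OPT(k,N,0))$; moreover an $\alpha$-approximate robust solution $S$ can be stripped of $S\cap X$ together with enough extra elements to leave a size-$k$ subset of $N$ that is $\alpha$-approximate for $P1$. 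This is a strict, approximation-preserving reduction into $P2$ with \emph{arbitrary} fixed $\tau$, which your $\tau=0$ specialization cannot deliver. Your approach buys simplicity; the paper's buys the per-$\tau$ hardness that actually matches the per-$\tau$ algorithms.
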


	\subsection{Negative Results}\label{ggreed}
	
	The example below demonstrates why the greedy algorithm that does well for instances of $P1$, fails for $P2$. However, the weakness will also indicate a property which will guide us towards better guarantees later.
	\vspace{-3mm}
	\paragraph{Example:} Consider a ground set $N$ of size $2k$ such that $f(a_1)=1$, $f(a_i)=0$, $\forall\, 2\leq i\leq k$ and $f(a_j)=\frac{1}{k}$, $\forall j\geq k+1$. Also, for all $j\geq k+1$, let $f(a_j|X)=\frac{1}{k}$ if $X\cap \{a_1,a_j\}=\emptyset$ and 0 otherwise. Consider the set $S=\{a_{k+1},\cdots,a_{2k}\}$ and let the set picked by the greedy algorithm (with arbitrary tie-breaking) be $A=\{a_1,\cdots,a_k\}$. Then we have that $f(A-a_1)=0$ and $f(S-a_j)=1-\frac{1}{k}$ for every $a_j\in S$. 
	The insight here is that greedy may select a set where only the first few elements contribute most of the value in the set, which makes it non-robust. However, as we discuss more formally later, such a concentration of value implies that only the first two elements, $\{a_1,a_2\}$, are critical and protecting against removal of those suffices for best possible guarantees. 
	
	In fact, many natural variations fail to give an approximation ratio better than $1/(k-1)$. Indeed, a guarantee of this order, i.e.\ $1/(k-\tau)$, is achievable for any $\tau$ by the following na\"{\i}ve algorithm: $\textit{Pick the $k$ largest value elements}$. 
	It is also important to examine if the function $g$ is super/sub-modular, since that would make existing techniques useful. It turns out not. 
	However, it is monotonic. Despite this, it is interesting to examine a natural variant of the greedy algorithm, where we greedily pick w.r.t $g$, but that variant can also be arbitrarily bad if we pick just one element at each iteration.  
	\section{Main Results}
	Before we start, remember that the focus of these results is on asymptotic performance guarantee (for large $k$). In some cases, the results can be improved for small $k$ but we generally ignore these details. 
	
	Additionally, in every algorithm that uses the greedy algorithm (Algorithm \ref{greedy}) as a subroutine, especially the fast and practical algorithms \ref{copyopt} ,\ref{a0.55} and \ref{greedytau}, we can replace the greedy addition rule of adding $x=\arg\max_{x\in S_1} f(x|S_2)$ for some $S_1,S_2$, by the more efficient thresholding rule in \cite{fast}, where, given a threshold $w$, we add a new element $x\in S_1$ if $f(x|S_2)\geq w$. This improves the query/run time to $O(\frac{n}{\epsilon}\log {\frac{n}{\epsilon}})$, at the cost of a factor of $(1-\epsilon)$ in the guarantee. 
	\subsection{Special case of ``copies''}\label{special}
	We first consider a special case, which will serve two purposes. First, it will simplify things and the insights gained for this case will generalize well. Secondly, since this case may arise in practical scenarios, it is worthwhile to discuss the special algorithms as they are much simpler than the general algorithms discussed later. 
	
	
	Given an element $x\in N$, we call another element $x'$ a \textit{copy} of element $x$ if, 
	$$f(x')=f(x)\text{ and } f(x'|x)=0.$$
	This implies $f(x|x')=f(\{x,x'\})-f(x')=f(x)+f(x'|x)-f(x')=0$. In fact, more generally, $f(x'|A)=f(x|A)$ for every $A\subseteq N$, since $f(A+\{x,x'\})=f(A+x)=f(A+x')$.
\emph{This is a useful case for robust sensor placement, if we were allowed to place/replicate multiple sensors at certain locations that are critical for coverage.} Assume that each element in $N$ has $\tau$ copies. In the next section we discuss algorithms for this special case when $\tau=1$. This will help build a foundation, even though the results therein are superseded by the result for $\tau=o(k)$.
	%
	%

	\subsubsection{Algorithms for $\tau=1$ in presence of ``copies''}\label{cpy1sec}
	Let $a'_i$ denote the copy of element $a_i$. As briefly indicated previously, we would like to make our set robust to removal of critical elements. In the presence of copies, adding a copy of these elements achieves that. So as a first step, consider a set that includes a copy of each element, and so is unaffected by single element removal. One way to do this is to run the greedy algorithm for $k/2$ iterations and then add a copy of each of the $k/2$ elements. Then, it follows from Lemma \ref{nem}
	that $g(S)=f(S)\geq \beta(0,\frac{k/2}{k-1})f(OPT(k-1,N,0))$ and then from Lemma \ref{fng} we have, $g(S)\geq (1-\frac{1}{\sqrt{e}})\, g(OPT(k,N,1))$, where we use the fact that $\beta (0,\frac{k/2}{k-1})>\beta(0,0.5)=(1-\frac{1}{\sqrt{e}})$. Hence, we have $\approx 0.393$-approximation and the bound is tight. We can certainly improve upon this, one way to do better is to think about whether we really need to copy all $k/2$ elements. It turns out that copying just $a_1$ and $a_2$ is enough! Intuitively, if the greedy set has value nicely spread out, we could get away without copying anything but nevertheless, in such a case copying just two elements does not affect the value much. Otherwise, as in the example from Section 2, if greedy concentrates its value on the first few elements, then copying them is enough. 
	Before we state and prove this formally, consider the below corollary:
	\begin{coro}\label{init}
		Let $A$ be the final set obtained by running the greedy algorithm for $l$ steps on an initial set $S$. Then we have,
		$$ f(A-S|S)\geq \beta\Big(0,\frac{l}{k}\Big)f(OPT(k,N,0)|S) \geq \beta\Big(0,\frac{l}{k}\Big)\Big(f\big(OPT(k,N,0)\big)-f(S)\Big)$$
	\end{coro}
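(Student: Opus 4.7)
The plan is to reduce to Lemma \ref{nem} by working with the ``shifted'' set function $h:2^{N}\to \mathbb{R}$ defined by $h(X)=f(X\cup S)-f(S)$, which effectively treats $S$ as part of the base point. First I would verify the standard check that $h$ is monotone submodular with $h(\emptyset)=0$: monotonicity and submodularity are inherited from $f$ since $h(x|T)=f(x\mid S\cup T)$, and this marginal is nonnegative and nonincreasing in $T$ by the corresponding properties of $f$. Elements of $S$ have zero marginal under $h$, so one may equivalently view $h$ as defined on $2^{N\setminus S}$ with the same cardinality $k$.

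The key observation is that running the greedy algorithm on $f$ starting from the set $S$ and running the greedy algorithm on $h$ starting from $\emptyset$ produce the same sequence of picks, because at each step both rules select $\argmax{x} f(x\mid S\cup T)=\argmax{x} h(x\mid T)$. Hence, after $l$ greedy steps on $f$ initialized at $S$, the resulting set $A$ satisfies $A\setminus S$ = output of $l$ greedy steps on $h$ from $\emptyset$, and
\[
h(A\setminus S)=f((A\setminus S)\cup S)-f(S)=f(A)-f(S)=f(A-S\mid S),
\]
where the last equality uses $A\supseteq S$.

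Applying Lemma \ref{nem} to $h$ with $\alpha=l/k$, I get
\[
h(A\setminus S)\;\ge\;\beta(0,l/k)\cdot\max_{|X|\le k} h(X).
\]
To bound the right-hand side, I plug in the feasible choice $X=OPT(k,N,0)\setminus S$ (of size at most $k$), yielding
\[
\max_{|X|\le k} h(X)\;\ge\;h(OPT(k,N,0)\setminus S)=f(OPT(k,N,0)\cup S)-f(S)=f(OPT(k,N,0)\mid S),
\]
which establishes the first inequality. The second inequality then follows immediately from monotonicity of $f$: $f(OPT(k,N,0)\cup S)\ge f(OPT(k,N,0))$, so $f(OPT(k,N,0)\mid S)\ge f(OPT(k,N,0))-f(S)$.

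There is no real obstacle here; the content is entirely in recognizing that Lemma \ref{nem} is a black-box statement about monotone submodular functions with $h(\emptyset)=0$, and that shifting by $S$ turns ``continue greedy from $S$'' into ``start greedy from $\emptyset$'' on the shifted function. The only point that warrants care is confirming that the equivalence of the two greedy executions is exact, which is immediate from the matching marginal-gain expressions above.
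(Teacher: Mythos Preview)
Your proof is correct and takes essentially the same approach as the paper: both reduce to Lemma~\ref{nem} by observing that $f(\cdot\mid S)$ is monotone submodular and that running greedy on $f$ starting from $S$ is equivalent to running greedy on $f(\cdot\mid S)$ starting from $\emptyset$. Your write-up is simply a more detailed unpacking of the paper's one-line justification.
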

	\begin{proof}
		Follows from Lemma \ref{nem}, since $f(.|S)$ is monotone submodular for any fixed set $S$ and union of greedy on $f(.|S)$ with $S$ is the same as doing greedy on $f(.)$ starting with $S$.
	\qed\end{proof}
	Suppose $A$, referred to as the greedy set, is the output of Algorithm \ref{greedy}. We now state and prove a simple yet key lemma, which will allow us to quantify how the ratio $\frac{f(A)}{f(OPT(k,N,0))}$ improves over $(1-1/e)$, as a function of how concentrated the value of the greedy set is on the first few elements.
	\begin{lem}\label{key}
		Starting with initial set $S$, run $l$ iterations of the greedy algorithm and let $A$ be the output (so $|A|=l+|S|$). Then given $f(S)\geq c f(A)$ for some $c\leq 1$ (high concentration of value on $S$ implies large $c$), we have, 
		$$f(A)\geq \beta\Big(c,\frac{l}{k}\Big) f(OPT(k,N,0)), \text{  for all }k\geq 1$$
		In a typical application of this lemma, we will have $S$ be the first $s=|S|$ elements of the greedy algorithm on $\emptyset$ and $c=s\eta$ for some $\eta\leq 1/s$. Additionally, $|A|$ can be greater than $k$.
	\end{lem}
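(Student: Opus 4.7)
The plan is to combine Corollary \ref{init} with the concentration hypothesis $f(S) \geq c f(A)$ by a short algebraic manipulation. Let $OPT = OPT(k, N, 0)$ and write $\beta_0 := \beta(0, l/k) = 1 - e^{-l/k}$. The only nontrivial input is Corollary \ref{init}; everything else is rearrangement.

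First, I would apply Corollary \ref{init} to the greedy continuation starting from $S$: after $l$ greedy steps we obtain
\[
f(A - S \mid S) \;\geq\; \beta_0\bigl(f(OPT) - f(S)\bigr).
\]
Adding $f(S)$ to both sides and using $f(A) = f(S) + f(A - S \mid S)$ gives
\[
f(A) \;\geq\; (1 - \beta_0)\, f(S) + \beta_0\, f(OPT).
\]
This already shows quantitatively how extra ``headroom'' in $f(S)$ translates into a better bound than the vanilla $(1 - 1/e)$.

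Next I would plug in the concentration hypothesis. Since $1 - \beta_0 = e^{-l/k} > 0$, the coefficient of $f(S)$ is positive, so substituting $f(S) \geq c f(A)$ preserves the inequality:
\[
f(A) \;\geq\; c(1 - \beta_0)\, f(A) + \beta_0\, f(OPT).
\]
Collecting the $f(A)$ terms on the left and dividing (note $1 - c(1-\beta_0) > 0$ because $c \leq 1$ and $\beta_0 > 0$) yields
\[
f(A) \;\geq\; \frac{\beta_0}{1 - c(1 - \beta_0)}\, f(OPT) \;=\; \frac{1 - e^{-l/k}}{1 - c e^{-l/k}}\, f(OPT).
\]
Multiplying top and bottom by $e^{l/k}$ recognizes this ratio as $\beta(c, l/k)$, which is the claimed bound.

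The main potential pitfall is purely a sign/monotonicity check: one has to verify that substituting the lower bound $f(S) \geq c f(A)$ into the earlier inequality is legitimate, which requires $1 - \beta_0 \geq 0$, and that the subsequent division is legitimate, which requires $1 - c(1-\beta_0) > 0$. Both hold trivially from $c \leq 1$ and $\beta_0 \in [0,1]$, so there is no real obstacle beyond the algebra above.
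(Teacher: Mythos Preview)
Your proof is correct and follows essentially the same route as the paper: both start from Corollary~\ref{init} to get $f(A)\ge (1-\beta_0)f(S)+\beta_0 f(OPT)$, substitute the hypothesis $f(S)\ge c f(A)$, and solve the resulting linear inequality in $f(A)$ to obtain $\beta(c,l/k)$. The paper introduces the shorthand $\delta=f(A)/f(OPT)$ before doing the same algebra, but the argument is identical.
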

	\begin{proof}	
		Denote $f(OPT(k,N,0))$ by $OPT$ and with the sets $A,S$ as defined above, let $\delta=\frac{f(A)}{OPT}$, which implies $f(S)\geq c\delta OPT$ (by assumption). Also, since we allow $|A|$ to be larger than $k$, assume $\delta\leq 1$ (otherwise statement is true by default).
		
		Now from Corollary \ref{init} we have, $	f(A)= f(S) + f(A-S|S) \geq f(S) +\beta(0,\frac{l}{k}) (OPT-f(S))$ and thus,
		\begin{eqnarray*}
			&\delta OPT  \geq  (1-\beta(0,\frac{l}{k}))\times c \delta f(OPT)+ \beta(0,\frac{l}{k}) f(OPT) \quad  [\text{substitution}] &\notag\\
			&\delta (1-c(1-\beta(0,\frac{l}{k}))) \geq \beta(0,\frac{l}{k})&\\
			&\delta (1-c\frac{1}{e^{l/k}}) \geq \frac{e^{l/k}-1}{e^{l/k}}&\\
			&\delta \geq \frac{e^{l/k}-1}{ e^{l/k} -c}=\beta\Big(c,\frac{l}{k}\Big) &\\
			&\implies f(A) \geq \beta\Big(c,\frac{l}{k}\Big)f(OPT(k,N,0)).\quad \qed 
		\end{eqnarray*}
	\end{proof}  
	Let us understand the above lemma with a quick example. Consider the greedy set of the first $k$ elements $A=\{a_1,\dots,a_k\}$ and let $f(OPT(k,N,0))=1$. Now, clearly $f(a_1)\geq \frac{f(A)}{k}$. So using the lemma with $S=\{a_1\}$ and $c=1/k$ gives us simply that $f(A)\geq \beta\Big(\frac{1}{k},\frac{k-1}{k}\Big)\approx (1-1/e)$, as expected. Next, if $f(a_1)\geq f(A)/2$, then we have that $f(A)\geq \beta\Big(\frac{1}{2},\frac{k-1}{k}\Big)\approx 0.77$, asymptotically much better than $(1-1/e)\approx 0.63$. Similarly, if $f(\{a_1,a_2\})\geq f(A)/2$ we again have $f(A)\geq \beta\Big(\frac{1}{2},\frac{k-2}{k}\Big)\approx 0.77$. Additionally, we could compare the value $f(A)$ to $f(OPT(k-1,N,0))$ instead, and then replacing $k$ by $k-1$ in the denominator, we have $f(A)\geq \beta\Big(\frac{1}{2},\frac{k-2}{k-1}\Big)f(OPT(k-1,N,0))$. 

	\emph{Now, consider the algorithm that copies the first two elements and thus outputs: $\{a_1,a'_1,a_2,a'_2,a_3,\dots,a_{k-2}\}$. Call this the 2-Copy algorithm.} Using the above lemma, we show that this algorithm gives us the best possible guarantee asymptotically, aligning with the intuitive argument we presented earlier. Preceding the actual analysis, we show an elementary lemma that we will use frequently.
		\begin{lem}\label{subkey}
		For $0\leq \eta\leq \frac{1}{3}$ and arbitrary $\alpha$, $\argmin{\eta} (1-\eta)\beta(3\eta,\alpha)=0.$
		\end{lem}
		\begin{proof}
				\begin{eqnarray*}
					(1-\eta)\beta(3\eta,\alpha)&& =\, \frac{1}{3}(3-3\eta) \frac{e^{\alpha}-1}{e^{\alpha}-3\eta}\, =\frac{1}{3}(e^{\alpha}-1)\Big(1+\frac{3-e^{\alpha}}{e^{\alpha}-3\eta}\Big)\\
					&&\geq \frac{1}{3}(e^{\alpha}-1)\Big(1+\frac{3-e^{\alpha}}{e^{\alpha}}\Big)\, = (1-1/e^{\alpha})\, = \beta(0,\alpha)
				\end{eqnarray*}
			\end{proof}
	\begin{thm}\label{2cpy}
		For the case with copies,  2-Copy is $\beta\big(0,\frac{k-5}{k-1}\big)\xrightarrow{k\to \infty}(1-1/e)$ approximate.
	\end{thm}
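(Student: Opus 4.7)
My plan is to characterize $g(S)$ exactly using the copy property, bound the worst-case single-element removal in terms of the third greedy marginal, and then apply Lemma~\ref{key} with a three-element initial set whose concentration offsets the loss.

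Write $A = \{a_1, a_2, \ldots, a_{k-2}\}$ for the size-$(k-2)$ greedy set and let $\eta = f(a_3 \mid \{a_1, a_2\})$. For any $z \in \{a_1, a_1', a_2, a_2'\}$, the copy property gives $f(S - z) = f(A)$; for $z = a_j$ with $j \geq 3$, it gives $f(S - z) = f(A - a_j)$. Hence $g(S) = \min_{j \geq 3} f(A - a_j)$. Because greedy marginals are non-increasing, submodularity yields $f(a_j \mid A - a_j) \leq f(a_j \mid \{a_1, \ldots, a_{j-1}\}) \leq \eta$ for every $j \geq 3$, so $g(S) \geq f(A) - \eta$.

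Next I would apply Lemma~\ref{key} with initial set $\{a_1, a_2, a_3\}$ and $l = k - 5$ iterations, comparing against $f(OPT(k-1, N, 0))$ so that the lemma's ratio $l/k$ becomes $(k-5)/(k-1)$, the exact exponent in the theorem. Since marginals are non-increasing, $f(\{a_1, a_2, a_3\}) \geq 3\eta$, so the choice $c = 3\eta/f(A) \leq 1$ is valid. Setting $\delta = f(A)/f(OPT(k-1,N,0))$ and $\mu = \eta/f(OPT(k-1,N,0))$, rearranging the lemma's conclusion $\delta \geq \beta(3\mu/\delta,\,(k-5)/(k-1))$ gives
\[
\delta \;\geq\; \beta\!\left(0,\, \tfrac{k-5}{k-1}\right) + 3\mu\, e^{-(k-5)/(k-1)}.
\]
Subtracting $\mu$ from both sides (via $g(S) \geq f(A) - \eta$) leaves a residual coefficient $3 e^{-(k-5)/(k-1)} - 1$ multiplying $\mu$. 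Because $(k-5)/(k-1) < 1 < \ln 3$, this coefficient is nonnegative, so $g(S) \geq \beta(0, (k-5)/(k-1))\, f(OPT(k-1, N, 0))$, and Lemma~\ref{fng} then bounds the right-hand side below by $\beta(0, (k-5)/(k-1))\, g(OPT(k, N, 1))$.

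The main obstacle I foresee is picking the right initial set for Lemma~\ref{key}. The natural choice $\{a_1, a_2\}$, matching the two copied elements, gives only $f(\{a_1, a_2\}) \geq 2\eta$ and a residual coefficient $2 e^{-\alpha} - 1$, which turns negative once $\alpha > \ln 2 \approx 0.693$; this is fatal as the exponent approaches $1$. Enlarging the initial set to $\{a_1, a_2, a_3\}$ simultaneously tightens the concentration to $3\eta$ and shifts the lemma's exponent to exactly $(k-5)/(k-1)$; since $\ln 3 > 1$, the residual coefficient $3 e^{-(k-5)/(k-1)} - 1$ stays nonnegative for every meaningful $k$, and the bound closes precisely at the stated rate.
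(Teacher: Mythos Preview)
Your proof is correct and follows essentially the same route as the paper: both use Lemma~\ref{key} with the three-element initial set $\{a_1,a_2,a_3\}$ (so $l=k-5$, compared against $OPT(k-1,N,0)$), exploit the concentration bound $f(\{a_1,a_2,a_3\})\ge 3\cdot(\text{removed marginal})$, and finish by showing the loss term is dominated precisely when $(k-5)/(k-1)<\ln 3$. The only cosmetic difference is that the paper parameterizes $\eta$ as the ratio $f(z\mid A-z)/f(A)$ and bounds the product $(1-\eta)\beta(3\eta,\alpha)$ directly, whereas you keep $\eta$ absolute and linearize $\beta(3\mu/\delta,\alpha)$ to expose the residual coefficient $3e^{-\alpha}-1$; the two computations are equivalent.
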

	\begin{proof}
		First, denote the output as $A=\{a_1,a'_1,a_2,a'_2,a_3,\dots,a_{k-2}\}$ and notice that $f(A)=f(\{a_1,a_2,\dots,a_{k-2}\})$. As a warm-up, using Lemma \ref{nem} we get, 
		\begin{eqnarray}
		f(A)\geq \beta\Big(0,\frac{k-2}{k-1}\Big) f(OPT(k-1,N,0))\geq \beta\Big(0,\frac{k-2}{k-1}\Big)g(OPT(k,N,1)),\label{bsc} 
		\end{eqnarray}
		 where the second inequality follows from Lemma \ref{fng}.
		
		Let $z$ be a minimizer of $A$. We can assume that $z\not\in \{a_1,a'_1,a_2,a'_2\}$ since all of these have 0 marginal. Now let $f(z|A-z)=\eta f(A)$. 
		We have due to greedy additions and submodularity, $f(a_3|\{a_1,a_2\})\geq f(z|\{a_1,a_2\})\geq  f(z|A-z)$ and similarly, $f(\{a_1,a_2\})\geq 2f(z|A-z)$. This implies that $f(\{a_1,a_2,a_3\})\geq 3\eta f(A)$, which relates the value removed by a minimizer $z$ to the value concentrated on the first 3 elements $\{a_1,a_2,a_3\}$. The higher the value removed, the higher the concentration and closer the value of $f(A)$ to $f(OPT(k-1,N,0))$. More formally, with $S=\{a_1,a'_1,a_2,a'_2,a_3\}$, $k$ replaced by $k-1$, $l=k-5$ and $c=3\eta$ ($\eta\leq 1/3$), we have from Lemma \ref{key},
		$$f(A)\geq  \beta\Big(3\eta,\frac{k-5}{k-1}\Big) f(OPT(k-1,N,0))\geq \beta\Big(3\eta,\frac{k-5}{k-1}\Big) g(OPT(k,N,1)).$$
		Which implies that $g(A)=(1-\eta)f(A)\geq (1-\eta)\beta(3\eta,\frac{k-5}{k-1}) g(OPT(k,N,1)).$ Applying Lemma \ref{subkey} with $\alpha=\frac{k-5}{k-1}$ finishes the proof.
		As an example, for $k\geq55$ the value of the ratio is $\geq 0.6$. Additionally, we can use more precise bounds of the greedy algorithm for small $k$ to get better guarantees in that regime. \qed 
	 \end{proof}
	The result also implies that for large $k$, if the output set $A$, of Algorithm \ref{greedy} (the greedy algorithm), has a minimizer $a_i$ with $i\geq 3$, then $g(A)=f(A-a_i)\geq (1-1/e)g(OPT(k,N,1))$, i.e. the greedy algorithm is $(1-1/e)$ approximate for the robust problem in this situation. This is because, for such an instance we can assume that the set contains copies of $a_1,a_2$ without changing anything and then Theorem \ref{2cpy} applies. 
	
	Moreover, if instead of just the first two, we copy the first $i$ elements for $i\geq 3$, we get the same guarantee but with worse asymptotics, so copying more than first two does not result in a gain. On the other hand, copying just one element, $a_1$, gives a tight guarantee of 0.5 (proof omitted). 
	
	Since $(1-1/e)$ is the best possible guarantee achievable asymptotically, we now shift focus to case of $\tau>1$ but $\ll k$, and generalize the above ideas to get an asymptotically $(1-1/e)$ approximate algorithm in presence of copies. 
	
		\subsubsection{$(1-1/e)$ Algorithms for $\tau=o(k)$ in the presence of ``copies''}\label{copygen}
		Assume we have $\tau$ copies available for each $a_i\in N$. As we did for $\tau=1$, we would like to determine a small critical set of elements, copy them (possibly many times) and then add the rest of the elements greedily to get a set of size $k$. In order to understand how large the critical set should be, recall that in the proof of Theorem \ref{2cpy}, we relied on the fact that $f(\{a_1,a_2\})$ is at least twice as much as the value removed by the minimizer, and then we could use Lemma \ref{key} to get the desired ratio. To get a similar concentration result on the first few elements for larger $\tau$, we start with an initial set of size $2\tau$ and in particular, we can start with $A_{2\tau}=\{a_1,a_2,\dots,a_{2\tau}\}$. Additionally, similar to the 2-Copy algorithm, we also want the set to be unaffected by removal of up to $\tau$ elements from $A_{2\tau}$. We do this by adding $\tau$ copies of each element in $A_{2\tau}$. More concretely, consider the algorithm that greedily picks the set $A_{k-2\tau^2}=\{a_1,a_2,\dots,a_{k-2\tau^2}\}$, and copies each element in $A_{2\tau}=\{a_1,a_2,\dots,a_{2\tau}\}\subseteq A_{k-2\tau^2}$, $\tau$ times. 
		 Denote the set of copies by $C(A_{2\tau})$ and we have $|C(A_{2\tau})|=2\tau^2$. To summarize, the algorithm outputs the set $$A=A_{2\tau}\cup C(A_{2\tau}) \cup \{a_{2\tau+1},\dots,a_{k-2\tau^2}\}.$$ Observe that when $\tau=1$, this coincides with the 2-Copy algorithm.
		
		We next show that this algorithm is $\beta(0,\frac{k-2\tau^2-3\tau}{k-\tau})\xrightarrow{k\to \infty} (1-1/e)$ approximate.
		\begin{proof}
			We can assume that $Z\cap (A_{2\tau}\cup C(A_{2\tau}))=\emptyset$ or alternatively $Z\subseteq A_{k-2\tau^2}-A_{2\tau}$, since there are $\tau+1$ copies of every element (counting the element itself) and $Z$ cannot remove all. Recall that in the analysis of the 2-Copy algorithm, we showed $f(\{a_1,a_2,a_3\})\geq 3f(z|A-z)$ and then used Lemmas \ref{key} and \ref{subkey} to get the desired. Analogously, here we would like to show and $f(A_{3\tau})\geq 3f(Z|A-Z)$ and do the same.
			
			Next, index elements in $Z$ from $1$ to $\tau$, with the mapping $\pi:\{1,\dots,\tau\}\rightarrow \{2\tau+1,\dots,k-2\tau^2\}$, such that $\pi(i)>\pi(j)$ for $i>j$ and $a_{\pi(i)}$ is the $i$-th element in $Z$. Then with $A_i=\{a_1,\dots,a_i\}$ for all $i$, we have by submodularity, the following set of inequalities,
			\begin{eqnarray}
			f(a_{\pi(i)}|A_{\pi(i)-1})&&= f(a_{\pi(i)}|A-\{a_{\pi(i)},\dots,a_{k-2\tau^2}\})\notag\\
			&&\geq f(a_{\pi(i)}|A-\{a_{\pi(i)},a_{\pi(i+1)},\dots,a_{\pi(\tau)}\})\quad \forall i\notag\\
			\implies \sum_{i=1}^{\tau} f(a_{\pi(i)}|A_{\pi(i)-1})&&\geq f(Z|A-Z)\label{sum}
			\end{eqnarray}
			where the RHS in (\ref{sum}) is by definition.
			
			Note that $\pi(i)>2\tau$, and for arbitrary $i\in\{1,\dots,\tau\}=[\tau]$, $j\in\{1,\dots,2\tau\}=[2\tau]$, due to greedy iterations we have that $f(a_{\pi(i)}|A_{\pi(i)-1})\leq f(a_j|A_{j-1})$. So consider any injective mapping from $i\in [\tau]$ to 2 distinct elements $i_1,i_2\in [2\tau]$, for instance $i_1=i,i_2=i+\tau$. We rewrite the previous inequality as, 
			$$2f(a_{\pi(i)}|A_{\pi(i)-1})\leq f(a_{i_1}|A_{i_1-1})+ f(a_{i_2}|A_{i_2-1})$$
			Summing over all $i$, along with (\ref{sum}) above gives,
			\begin{eqnarray}2f(Z|A-Z)\leq f(A_{2\tau})\label{iter}\end{eqnarray}
			where the RHS is by injective nature of mapping and definition of $f(.|.)$. 
			
			In fact, we have $\pi(i)\geq 2\tau+i$ and thus, $f(a_{\pi(i)}|A_{\pi(i)-1})\leq f(a_{2\tau+i}|A_{2\tau+i-1})$. 	From this, we have for the set $A_{3\tau}-A_{2\tau}=\{a_{2\tau+1},\dots,a_{3\tau}\}$ that, 
			\begin{eqnarray}f(Z|A-Z)\leq f(A_{3\tau}-A_{2\tau}|A_{2\tau})\label{iter2}\end{eqnarray}
			(\ref{iter}) and (\ref{iter2}) combined give us,
			$$f(A_{3\tau})\geq 3 f(Z|A-Z)$$
			We have from Lemma \ref{fng} that $f(OPT(k-\tau,N,0))\geq g(OPT(k,N,\tau))$ and combined with using Lemma \ref{key}, with $k$ replaced by $k-\tau$, $S=A_{3\tau} \cup C(A_{2\tau})$, $s=3$ and $l=k-|S|=k-2\tau^2-3\tau$ and $f(Z|A-Z)=\eta f(A)$ gives us,
			$$	f(A)\geq \beta\Big(3\eta,\frac{k-2\tau^2-3\tau}{k-\tau}\Big)g(OPT(k,N,\tau)).$$
			Thus $g(A)=(1-\eta)f(A)\geq \beta\Big(0,\frac{k-2\tau^2-3\tau}{k-\tau}\Big)g(OPT(k,N,\tau))$, follows using Lemma \ref{subkey}. \qed
		\end{proof}
		
		Note that while we could ignore the asymptotic factors in approximation guarantee for $\tau=1$, here we cannot, and this is where the upper bound on $\tau$ comes in. Recall that we compare the value $g(OPT(k,N,\tau))$, which is the $f(.)$ value of a set of size $k-\tau$, to the $f(.)$ value of a set of size $k-\Theta(\tau^2)$ (since the $\Theta(\tau^2)$ elements added as copies do not contribute any real value). Now $\frac{k-\Theta(\tau^2)}{k}$ converges to 1 only for $\tau\ll\sqrt{k}$ and it is this degradation that creates the threshold of $o(\sqrt{k})$. 
		
		However, it turns out that we don't need to add $\tau$ copies of each element in $A_{2\tau}$. Intuitively, the first few elements in $A_{2\tau}$ are more important and for those we should add $\tau$ copies, but the later elements are not as important and we can add fewer copies. 
		In fact, we can geometrically decrease the number of copies we add from $\tau$, to $1$, over the course of the $2\tau$ elements, resulting in a total of $\Theta(\tau\log\tau)$ copies. The resulting approximation ratio converges to $(1-1/e)$ for $\tau=o(k)$.
		
		More concretely, consider the following algorithm,
		
		{\centering
			\vspace{-4mm}
			\begin{minipage}{\textwidth}
				\begin{algorithm}[H]
					\caption{$(1-1/e)$ Algorithm for copies when $\tau=o(k)$}
					\label{copyopt}
					\begin{algorithmic}[1]
						\State Initialize $A=A_{2\tau}, i=1$
						\While{$i\leq\ceil{\log 2\tau}$} 
						\State $A=A\cup\big(\ceil{\tau/2^{i-1}}\text{ copies for each of } \{a_{2^i-1},\dots,a_{2^{i+1}-2}\}\cap A_{2\tau}\big)$; $i=i+1$
						\EndWhile
						\While{$|A|< k$} $A=A+\argmax{x\in N-A}f(x|A)$
						\EndWhile
						\vspace{-3mm}
						\State Output: $A$
					\end{algorithmic}
				\end{algorithm}
			\end{minipage}
			\vspace{-1mm}
			\centering}
		
		Essentially, we start with the set $A_{2\tau}$, add $\tau$ copies each for $\{a_1,a_2\}$, $\tau/2$ copies for each of $\{a_3,\dots,a_6\}$, $\tau/4$ copies for each of $\{a_7,\dots,a_{14}\}$ and so on, finally adding the rest of the $k-\Theta(\tau\log\tau)$ elements greedily. Notice that we could get the best possible guarantee with a minimizer oblivious algorithm i.e., the output is independent of the minimizer at any stage of the algorithm.
		\vspace{-3mm}
		\begin{thm}
			Algorithm \ref{copyopt} is $\beta\Big(0,\frac{k-\Theta(\tau\log\tau)}{k-\tau}\Big)\xrightarrow{k\to\infty}(1-1/e)$ approximate for $\tau=o(k)$.
		\end{thm}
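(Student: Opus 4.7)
The plan is to replicate the proof template of the $\beta\Big(0,\frac{k-2\tau^2-3\tau}{k-\tau}\Big)$ guarantee established earlier in this section for the $2\tau^2$-copy algorithm; the only real change is in the verification of the concentration inequality $f(A_{3\tau})\geq 3 f(Z|A-Z)$, where $Z$ denotes an arbitrary minimizer of $A$ and we write $m_j:=f(a_j|A_{j-1})$. Once this inequality is in hand, we take $S:=A_{3\tau}\cup C(A_{2\tau})$, whose size is $|S|=3\tau+2\tau\lceil\log 2\tau\rceil=2\tau(\log 2\tau+3/2)$ up to a $\lceil\cdot\rceil$ slack absorbed by $\tau=o(k)$, apply Lemma~\ref{key} with $k$ replaced by $k-\tau$ and $l=k-|S|$ greedy iterations, invoke Lemma~\ref{fng}, multiply by $(1-\eta)$ to recover $g(A)$, and finish with the $\eta$-minimization from the proof of Theorem~\ref{2cpy} (which is valid asymptotically since $e^{l/(k-\tau)}\le e<3$) to turn $(1-\eta)\beta(3\eta,\alpha)$ into $\beta(0,\alpha)$.

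To prepare the concentration inequality, I would introduce $\hat Z:=B\cup Y$, the set of distinct elements whose removal actually changes $f(A-Z)$: $B:=\hat Z\cap A_{2\tau}$ consists of those elements of $A_{2\tau}$ that are fully removed, i.e., for which $a_q$ and all $\lceil\tau/2^{j(q)-1}\rceil$ of its copies lie in $Z$, while $Y$ is the remainder (necessarily at positions $>2\tau$). Each $a_q\in B$ in bracket $j(q)$ consumes $\lceil\tau/2^{j(q)-1}\rceil+1$ slots of $Z$, giving the budget identity $\sum_{a_q\in B}(\lceil\tau/2^{j(q)-1}\rceil+1)+|Y|\le\tau$. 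Since removing any element of bracket~$1$ would require $\tau+1>\tau$ slots, $a_1,a_2\notin B$; the same bracket-by-bracket accounting applied to each initial segment $A_{2i}$ yields the key structural bound $|\hat Z\cap A_{2i}|\le i-1$ for every $i\le|\hat Z|$, equivalently $p_i\geq 2i+1$, where $p_1<p_2<\cdots<p_{|\hat Z|}$ are the sorted positions of $\hat Z$.

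With $p_i\ge 2i+1$ and the submodularity bound $f(Z|A-Z)\le\sum_{i} m_{p_i}$, the first half, $f(A_{2\tau})\geq 2f(Z|A-Z)$ (the analogue of (\ref{iter})), follows in one line: the non-increasing property of $m_\cdot$ gives $m_{p_i}\le m_{2i-1}$ and $m_{p_i}\le m_{2i}$, hence $2m_{p_i}\le m_{2i-1}+m_{2i}$, and summing telescopes to $2f(Z|A-Z)\le f(A_{2|\hat Z|})\le f(A_{2\tau})$. The second half, $f(A_{3\tau})-f(A_{2\tau})\geq f(Z|A-Z)$ (the analogue of (\ref{iter2})), is where I expect the real work. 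For $p=r_j\in Y$ we still have $r_j\ge 2\tau+j$, so $\sum_{p\in Y} m_p\le\sum_{j=1}^{|Y|} m_{2\tau+j}$; for $q\in B$ we need the more delicate amortized inequality $\sum_{q\in B}m_q\le\sum_{j=|Y|+1}^{\tau}m_{2\tau+j}$, which charges each $a_q\in B$ to the batch of $\lceil\tau/2^{j(q)-1}\rceil+1$ of the ``saved'' later greedy marginals that the budget identity forces the adversary to leave in place. Pushing this charging argument through uniformly across every admissible bracket distribution of $B$, using the non-increasing behavior of $m_j$ together with the bracket sizes and the greedy rule, is the main technical obstacle.

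Assuming both halves, $f(A_{3\tau})\geq 3 f(Z|A-Z)$ follows by addition. Setting $\eta:=f(Z|A-Z)/f(A)$, we have $f(S)=f(A_{3\tau})\geq 3\eta f(A)$, so Lemma~\ref{key} applied with $k\to k-\tau$ and $l=k-|S|=k-2\tau(\log 2\tau+3/2)$ gives $f(A)\geq\beta\Big(3\eta,\frac{l}{k-\tau}\Big)f(OPT(k-\tau,N,0))$; Lemma~\ref{fng} replaces the right-hand side by $g(OPT(k,N,\tau))$, and then $g(A)=(1-\eta)f(A)\geq(1-\eta)\beta\Big(3\eta,\frac{l}{k-\tau}\Big)g(OPT(k,N,\tau))$. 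The same calculation as in the proof of Theorem~\ref{2cpy} (which only requires $e^{l/(k-\tau)}\le 3$) shows $(1-\eta)\beta(3\eta,\alpha)\geq\beta(0,\alpha)$ for $\alpha=l/(k-\tau)$, yielding the claimed ratio $\beta\Big(0,\frac{k-2\tau(\log 2\tau+3/2)}{k-\tau}\Big)$, which tends to $1-1/e$ as $k\to\infty$ whenever $\tau=o(k)$.
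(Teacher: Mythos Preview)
Your global template matches the paper exactly: reduce to the concentration inequality $f(A_{3\tau})\ge 3\,f(Z\mid A-Z)$, then invoke Lemma~\ref{key} with $S=A_{3\tau}\cup C(A_{2\tau})$ and finish via the $(1-\eta)\beta(3\eta,\alpha)\ge\beta(0,\alpha)$ calculation from Theorem~\ref{2cpy}. Your ``first half'' $f(A_{2\tau})\ge 2f(Z\mid A-Z)$, obtained from the position bound $p_i\ge 2i$ together with $m_{p_i}\le m_{2i-1},\,m_{p_i}\le m_{2i}$, is correct.

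The genuine gap is the ``second half''. The inequality
\[
f(A_{3\tau})-f(A_{2\tau})\ \ge\ f(Z\mid A-Z)
\]
is \emph{false} whenever $B\neq\emptyset$, so no charging scheme can establish it. Concretely, take $\tau=8$ and a coverage instance with $m_1=m_2=m_3=1$ and $m_j=0$ for all $j\ge 4$. Let $Z$ remove $a_3$ together with its four bracket-$2$ copies (total cost $5\le 8$). Then $f(A_{24})-f(A_{16})=0$ while $f(Z\mid A-Z)=1$. The reason is exactly the obstruction you sensed: every $q\in B$ satisfies $q<2\tau+j$ for all $j\ge 1$, so $m_q\ge m_{2\tau+j}$, and there is no way to upper-bound $\sum_{q\in B}m_q$ by marginals in the window $(2\tau,3\tau]$.

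The paper's route avoids this by splitting $Z$ rather than $f(A_{3\tau})$. With $Z_0=Z\cap\bigl(A_{2\tau}\cup C(A_{2\tau})\bigr)$ and $Z_1=Z\setminus Z_0$ one has $f(Z\mid A-Z)=f(Z_0\mid A-Z)+f(Z_1\mid A-Z_1)$. The $Z_1$ part (positions $>2\tau$) is handled exactly as in (\ref{iter}) and (\ref{iter2}), giving the full factor $3$. For $Z_0$, the missing ``third copy'' is not found in $(2\tau,3\tau]$ at all; it comes from $Z_0$ itself sitting inside $A_{2\tau}$. When $B\subseteq B_i$ for a single bracket $i$, since $\bigcup_{j<i}B_j\subseteq A-Z$ one gets
\[
f(A_{2\tau})\ \ge\ f\Bigl(\bigcup_{j<i}B_j\Bigr)+f\Bigl(Z_0\,\Big|\,\bigcup_{j<i}B_j\Bigr)\ \ge\ 2f(Z_0\mid A-Z)+f(Z_0\mid A-Z),
\]
the $2\times$ from $\bigl|\bigcup_{j<i}B_j\bigr|=2(|B_i|/2-1)\ge 2|B|$ and greedy monotonicity, and the extra $1\times$ from submodularity. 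The general mixed case is the same idea carried out bracket by bracket; the paper records only the conclusion $f(A_{3\tau})\ge 3f(Z_1\mid A-Z_1)+3f(Z_0\mid A-Z)$. Replacing your second half by this $Z_0/Z_1$ decomposition makes the rest of your argument go through unchanged.
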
 
		\vspace{-3mm}
		\begin{proof}
			The basic outline of the analysis is similar to that of the previous one for $\tau=o(\sqrt{k})$, so we focus only on the differences here. 
			
			Let $A_1$ denote the subset of $A$ obtained in the final greedy phase (step 4 in Algorithm \ref{copyopt}). Note that if $Z\cap A_{2\tau}=\emptyset$, then we have (\ref{iter}) and (\ref{iter2}) as before and thus $f(A_{3\tau}) \geq 3f(Z|A-Z)$. By applying Lemma \ref{key} with $l=k-\Theta(\tau\log\tau)$ this time, we get the desired. However, unlike the previous analysis, here $Z\cap A_{2\tau}$ need not be empty since we have less than $\tau$ copies for many elements. We would like to show that $f(A_{2\tau}-Z)\geq 2 f(Z|A-Z)$ regardless. Let $m=\ceil{\log 2\tau}-1$ and in fact, for ease of presentation we assume that $\tau=2^{m}$. Also let $B_i=\{a_{2^i-1},\dots,a_{2^{i+1}-2}\}\cap A_{2\tau}$ i.e., the elements for which we add $\tau/2^{i-1}$ copies in the algorithm and let $C_i$ denote the set of copies of these elements. Note that $|B_i|=2^i$ and $|C_i|= 2\tau\geq 2|B_i|$ for all $i\leq m$, for $i=m+1$, $B_{m+1}=\{a_{2\tau-1},a_{2\tau}\}$ and $|C_{m+1}|=2$. We can assume that for every element in $A_{2\tau}$ included in minimizer $Z$, all copies of the element are also present in $Z$, hence $Z$ removes at most $\floor{\frac{\tau}{1+\tau/2^{i-1}}}=\floor{\frac{|B_i|}{2}\frac{\tau}{\tau+2^{i-1}}}\leq |B_i|/2-1$ elements from $B_i,\forall i\leq m$. So we assume w.l.o.g. $|Z\cap B_1|=0$.
			
			Let us first examine the case where $Z\subset B_i\cup C_i$ for some $2\leq i\leq m$ (the case of $i=m+1$ follows rather easily). Notice that $\sum_{j=1}^{i-1} |B_j|= 2(|B_i|/2-1)$, then from the observation above we have $2|Z\cap B_i|\leq 2\sum_{j=1}^{i-1} |B_j|$. This implies that we can injectively map any $|B_i|/2-1= 2^{i-1}-1$ elements in $B_i$, to two distinct elements with lower indices in $\cup_{j=1}^{i-1} B_j$. Then, just as we showed in (\ref{iter2}), we have an injective mapping from every element in $Z\cap A_{2\tau}$ to two distinct elements in $\cup_{j=1}^{i-1} B_j$, and this gives us $f(\cup_{j=1}^{i-1} B_j)\geq 2f(Z|A-Z)$ which further implies $f(A_{2\tau})\geq f(\cup_{j=1}^{i-1} B_j) +f(Z|\cup_{j=1}^{i-1} B_j)\geq 3f(Z|A-Z)$ and applying Lemmas \ref{key} and \ref{subkey} completes the case. 
			
			For the general case, let $x_j=|Z\cap B_j|, \forall j\in\{2,\dots,m+1\}$ (with $x_{m+1}\leq 2$) and let $x_{m+2}=|Z\cap A_1|$. From $|Z|=\tau$ and the fact that $Z$ contains all copies of every element in $|Z\cap A_{2\tau}|$, we have $\sum_{j=2}^{m+1} x_j(1+\frac{\tau}{2^{j-1}}) +x_{m+2}\leq \tau$. Observe that to show the existence of the desired injective mapping, it suffices to show,
			 \begin{eqnarray}
			 2x_i&&\leq |\cup_{j=1}^{i-1} B_j-Z|-\sum_{j=1}^{i-1}2x_j \notag\\
			 &&=|\cup_{j=1}^{i-1} B_j|-\sum_{j=1}^{i-1}3x_j, \forall i\in\{2,\dots,m+2\} \label{summary}
			 \end{eqnarray}
			
			Consider the polytope given by $X=(x_2,\dots,x_{m+2})$ and constraints $0\leq x_j\leq \floor{\frac{|B_i|}{2}\frac{\tau}{\tau+2^{i-1}}}\, \forall j$ and $\sum_{j=1}^{m+1} x_j(1+\frac{\tau}{2^{j-1}}) +x_{m+2}\leq \tau$. Then the extreme points correspond exactly to the special cases we considered so far i.e. (i)$Z\cap A_{2\tau}=\emptyset$ and (ii) $Z\subset B_i\cap C_i$, and we showed that the conditions (\ref{summary}) are satisfied for these cases. This implies the (linear) conditions (\ref{summary}) are satisfied for every point in the polytope and that completes the proof. \qed
		\end{proof}
	
	\subsection{Algorithms in the possible absence of ``copies''}
	To recap, thus far we chose a set of elements greedily and treated a suitably large subset of the initial few elements as `critical' and added `enough' copies of these elements to ensure that we keep a copy of each critical element in the set, even after adversarial removal. We aim to follow a similar scheme even in the absence of copies. In the general case however, we need to figure out a new way to ensure that our set is robust to removal of the first few critical elements chosen greedily. We first discuss how to approach this for $\tau=1$. 	
	\subsubsection{Algorithms for $\tau=1$ in the possible absence of ``copies''} \label{gensec}

	We start by discussing how one could construct a greedy set that is robust to the removal of $a_1$. In the case of copies, we would simply add a copy $a'_1$ of $a_1$, to accomplish this. Here, one approach would be to pick $a_1$ and then pick the rest of the elements greedily while ignoring $a_1$. Note that this would add a copy of $a_1$, if it were available, and if not it will permit the selection of elements which have small marginal on any set containing $a_1$, but possibly large marginal value in the absence of $a_1$. Such an element need not be selected by the standard Algorithm \ref{greedy} that doesn't ignore $a_1$. Formally,
	
	{\centering
		\vspace{-4mm}
		\begin{minipage}{\textwidth}
			\begin{algorithm}[H]
				\caption{0.387 Algorithm}
				\label{a0.387}
				\begin{algorithmic}[1]
					\State Initialize $A=\{a_1\}$
					\While{$|A|< k$} $A=A+\argmax{x\in N-A}f(x|A-\{a_1\})$
					\EndWhile
					\vspace{-3mm}
					\State Output: $A$
				\end{algorithmic}
			\end{algorithm}
		\end{minipage}
		\centering}
	
	This simple algorithm is in fact, asymptotically 0.387 approximate and the bound is tight 
	(proof in Appendix \ref{app2}). However, a possible issue with the algorithm is that it is oblivious to the minimizers of the set at any iteration. It ignores $a_1$ throughout, even if $a_1$ stops being the minimizer after a few iterations. Thus, if after every iteration, we check the minimizer and stop ignoring $a_1$ once it is not a minimizer (i.e. proceed with standard greedy iterations after such a point), we achieve a performance guarantee of 0.5 (proof omitted). Note that this matches the guarantee obtained by copying just $a_1$, in presence of copies, and so in this sense, we can build a set that is robust to removal of $a_1$ in the general setting. 
	
	As we saw for the case of copies, in order to get even better guarantees, we need to look at the set of the first two elements, $\{a_1,a_2\}$. A direct generalization of line 2 in Algorithm \ref{a0.387}, to a rule that ignores both $a_1$ and $a_2$, i.e.\  $\argmax{x\in N-A} f(x|A-\{a_1,a_2\})$, can be shown to have a performance bound less than $0.5$. In fact, many natural rules that were tried resulted in upper bounds $\leq 0.5$. Algorithm \ref{a0.55} avoids looking at both elements simultaneously and instead ignores $a_1$ until its marginal becomes sufficiently small and then does the same for $a_2$, if required. 
	
	{\centering
		\vspace{-3mm}
		\begin{minipage}{\textwidth}
			\begin{algorithm}[H]
				\caption{0.5547$-\Omega(1/k)$ Algorithm}
				\label{a0.55}
				\begin{algorithmic}[1]
					\State Initialize $A=\{a_1,a_2\}$
					
					\noindent \textbf{Phase 1:}
					\While{$|A|< k \textbf{ and } f(a_1|A-a_1)> \frac{f(A)}{3}$} $A=A+\argmax{x\in N-A}f(x|A-a_1)$
					\EndWhile
					\vspace{-3mm}
					\noindent \textbf{Phase 2:}
					\vspace{3mm}
					\While{$|A|< k \textbf{ and } f(a_2|A-a_2)> \frac{f(A)}{3}$} $A=A+\argmax{x\in N-A}f(x|A-a_2)$
					\EndWhile
					\vspace{-3mm}
					\noindent \textbf{Phase 3:}
					\vspace{3.5mm}
					\While{$|A|< k$} $A=A+\argmax{x\in N-A}f(x|A)$
					\EndWhile
					\vspace{-2.5mm}
					\State Output:$ A$
				\end{algorithmic}
			\end{algorithm}
		\end{minipage}
		\centering}
	
	The algorithm is asymptotically 0.5547-approximate (as an example, guarantee $>0.5$ for $k\geq50$) and note that it's minimizer oblivious and only uses greedy as subroutine, which makes it fast and easy to implement (recall that greedy can be replaced by thresholding). The analysis is presented in Appendix \ref{app3}. 

	Next, recall that we want to make the final set robust to removal of either one of the two elements $\{a_1,a_2\}$. 
	In order to improve upon the guarantee of Algorithm \ref{a0.55}, which deals with the two elements one at a time, first ignoring $a_1$ and then $a_2$ if required, we devise a way to add new elements while paying attention to both $a_1$ and $a_2$ simultaneously. 
	To this end, consider the following addition rule, 
	\begin{eqnarray*}
		&\argmax{|S|\leq m;S\subseteq N-A} g(S|A)=\argmax{|S|\leq m;S\subseteq N-A} \big[&\min\big\{f(S+A-a_1),f(S+A-a_2)\big\}-\\
		&&\min\big\{f(A-a_1),f(A-a_2)\big\}\big],
	\end{eqnarray*}
	 i.e. greedily adding $m$ tuples but w.r.t. to the $g(.)$ function now instead of $f(.)$, while $z\in\{a_1,a_2\}$, for suitable $m\geq 1$.
	We need to resort to $m$-tuples instead of singletons because, for $m=1$ we cannot guarantee improvements at each iteration, as there need not be any element that adds marginal value on both $a_1$ and $a_2$. However, for larger $m$ we can show improving guarantees.
	More concretely, consider an instance where $f(a_1)=f(a_2)=1$, $a_1$ has a copy $a'_1$ and additionally both $a_1$ and $a_2$ have `partial' copies, $f(a^j_i)=\frac{1}{k}$ and $f(a^j_i|a_i)=0$ for $j\in \{1,\dots,k\}$, $i\in\{1,2\}$. Also, let there be a set $G$ of $k-2$ garbage elements with $f(G)=0$. Finally, let $f(\{a_1,a_2\})=2$ and $f(a^j_i|X)=\frac{1}{k}$ if $a_i\not\in X$ (and also $a'_1\not\in X$ for $i=1$). Running the algorithm with: (i) $m=1$ outputs $\{a_1,a_2\}\cup G$ in the worst case, with (ii) $m=2$ outputs $a^j_1,a^j_2$ on step $j$ of Phase 1 and thus, `partially' copies both $a_1$ and $a_2$. Instead, if we run the algorithm with (iii) $m=3$, the algorithm picks up $a'_1$ and then copies $a_2$ almost completely with $\{a^1_2,\dots,a^{k-3}_2\}$. In fact, we will show that while $\{a_1,a_2\}$ are minimizers, adding $m$-tuples in this manner allows us to guarantee that at each step we increase $g(A)$ by $\frac{m-1}{m}\frac{1}{k}$ times the difference from optimal. Thus, when $m$ is large enough that $\frac{m-1}{m}\approx1$, we effectively add value at the `greedy' rate of $\frac{1}{k}$ times the difference from optimal (ref. Lemma \ref{nem}). However, this is while $z\in\{a_1,a_2\}$, so we need to address the case when $\{a_1,a_2\}$ are not minimizers. One approach would be to follow along the lines of Algorithm \ref{a0.55} by adding singletons greedily w.r.t. $f$ (similar to Phase 3) once the minimizer falls out of $\{a_1,a_2\}$. This is what we do in the algorithm below, recall that $\mathcal{Z}(A)$ is the set of minimizers of set $A$.
	
	{\centering
			\vspace{-3mm}
		\begin{minipage}{\textwidth}
			\begin{algorithm}[H]
				\caption{A $(1-1/e)-1/\Theta(m)$ Algorithm for $\tau=1$}
				\label{a0.63}
				\textbf{input: } $m$
				\begin{algorithmic}[1]
					\State Initialize $A=\{a_1,a_2\}$
					
					\noindent \textbf{Phase 1:} \While{$|A|< k \textbf{ and } \mathcal{Z}(A)\subseteq\{a_1,a_2\}$}
					\State $l=\min\{m,k-|A|\}$ 
					\State $A=A\cup \underset{|S|= l;S\subseteq N-A}{\text{argmax}} g(S|A)$
					\EndWhile
					\vspace{-3mm}
					\noindent \textbf{Phase 2:}
					\vspace{3mm}
					\While{$|A|< k$} $A=A+\argmax{x\in N-A}f(x|A)$
					\EndWhile
					\vspace{-3mm}
					\State Output:$ A$
				\end{algorithmic}
			\end{algorithm}
		\end{minipage}
		\vspace{0mm}
		\centering}

	Before analyzing the approximation guarantee of Algorithm \ref{a0.63}, we first need to show that in Phase 1, at each step we greedily add an $m$-tuple with marginal value $\frac{m-1}{k}$ times the difference from optimal. We do this by showing a more general property below,
	
	\begin{lem}\label{pareto}
		Given two monotone submodular functions, $f_1,f_2$ on ground set $N$. If there exists a set $S$ of size $k$, such that $f_i(S)\geq V_i, \forall i$, then for every $m$ ($2\leq m\leq k$), there exists a set $X\subseteq S$ with size $m$, such that, $$f_i(X)\geq \frac{m-1}{k}V_i,\forall i.$$
	\end{lem}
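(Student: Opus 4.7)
My plan is to prove this lemma by induction on $m$, using the single-function averaging property of monotone submodular functions combined with a joint pigeonhole argument on marginal contributions. For the base case $m = 2$, submodularity and monotonicity yield $\sum_{e \in S} f_i(e) \geq f_i(S) \geq V_i$, so there exists $e_i^\ast \in S$ with $f_i(e_i^\ast) \geq V_i/k$. If $e_1^\ast \neq e_2^\ast$ I would take $X = \{e_1^\ast, e_2^\ast\}$, so that $f_i(X) \geq f_i(e_i^\ast) \geq V_i/k = \frac{m-1}{k} V_i$ by monotonicity; if $e_1^\ast = e_2^\ast$ I extend the common singleton by any other element of $S$.

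For the inductive step, the lemma for size $m-1$ gives $Y \subseteq S$ with $|Y| = m-1$ and $f_i(Y) \geq \frac{m-2}{k} V_i$ for both $i$. I would seek $e \in S \setminus Y$ satisfying $f_i(e \mid Y) \geq V_i/k$ for both $i$ simultaneously; taking $X := Y \cup \{e\}$ then yields $f_i(X) = f_i(Y) + f_i(e \mid Y) \geq \frac{m-1}{k} V_i$. By submodularity, $\sum_{e \in S \setminus Y} f_i(e \mid Y) \geq f_i(S) - f_i(Y) \geq \frac{k-m+2}{k} V_i$, so the mean marginal over the $(k-m+1)$-element set $S \setminus Y$ strictly exceeds $V_i/k$, providing per-function averaging slack. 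The main obstacle is converting this per-function slack into a joint argument identifying a single $e$ that satisfies both lower bounds simultaneously, since a naive pigeonhole or union bound can fail when $k - m$ is large. I would address this either by adaptively refining the choice of $Y$ to additionally balance the $f_1$- and $f_2$-marginals on $S \setminus Y$, or by strengthening the inductive hypothesis to track auxiliary structural information about $Y$.

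As a backup, I would pass to the multilinear extensions $F_i$ of each $f_i$. The point $y := (m/k)\mathbf{1}_S$ satisfies $F_i(y) \geq (m/k) V_i$ by a standard sampling argument: for $R \subseteq S$ with each element included independently with probability $m/k$, $\mathbb{E}[f_i(R)] \geq (m/k) f_i(S)$ for monotone submodular $f_i$. Swap rounding on the rank-$m$ uniform matroid over $S$, in the spirit of Chekuri--Vondr\'{a}k--Zenklusen, produces a random integral $m$-subset $X$ with $\mathbb{E}[f_i(X)] \geq F_i(y)$ together with Chernoff-type concentration. A union bound combined with the $1/k$ slack between $m/k$ and $\frac{m-1}{k}$ would then certify positive probability that $X$ simultaneously satisfies $f_i(X) \geq \frac{m-1}{k} V_i$ for both $i$, completing the proof.
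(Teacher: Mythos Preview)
Your inductive step has a genuine gap that you yourself flag but do not close. After obtaining $Y$ of size $m-1$, you want a single $e\in S\setminus Y$ with $f_i(e\mid Y)\ge V_i/k$ for both $i$, and you correctly observe that the per-function averaging slack does not by itself yield this. The two fixes you sketch (``adaptively refining $Y$'' or ``strengthening the inductive hypothesis'') are not carried out, so the argument stops precisely at the hard point. The backup via multilinear extensions plus swap rounding is also not complete: the CVZ concentration bounds are stated in terms of the absolute value $F_i(y)$, so for small $V_i$ the tail probability need not be below $1/2$, and the $1/k$ slack between $m/k$ and $(m-1)/k$ is not obviously enough to force positive probability of the joint event without further assumptions or a scaling argument you do not provide.

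The paper avoids the single-element obstacle altogether by two moves you are missing. First, it reduces to \emph{modular} $h_i$ by fixing an ordering $s_1,\dots,s_k$ of $S$ and setting $h_i(s_j):=f_i(s_j\mid S_{j-1})$; then $h_i(S)=f_i(S)$ and $h_i(X)\le f_i(X)$ for every $X\subseteq S$, so it suffices to prove the bound for $h_1,h_2$. Second, with modularity in hand, the induction from $p$ to $p+1$ does not add one element to a single $Y$. Instead it repeatedly peels off disjoint size-$p$ sets $X_0,X_1,\dots$, each satisfying $(p-1)/k\le h_i(X_j)<p/k$ (if any $X_j$ already has $h_i(X_j)\ge p/k$ for some $i$, you can finish immediately by adding one element for the other function). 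After exhausting $S$ down to a remainder $X'$ of size $p'<p$, modularity forces $h_i(X')\ge p'/k$; if $p'=0$ this contradicts $h_i(X_j)<p/k$ for all $j$, and if $p'\ge1$ you combine $X'$ with a set of size $p+1-p'$ obtained by induction on $S\setminus X'$. This exhaustion argument is exactly what bypasses the two-function pigeonhole difficulty you ran into.
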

	\begin{proof}
		First, we show that it suffices to prove this statement for two modular functions $h_1,h_2$. Note that we can reduce our ground set to $S$. Now, consider an arbitrary indexing of elements in $S=\{s_1,\dots,s_k\}$ and let $S_j=\{s_1,\dots,s_j\}, \forall j\in [k]$. Consider modular functions such that value of element $s_j$ is $h_i(s_j)\vcentcolon= f_i(s_j|S_{j-1})$. Note that $h_i(S)=f_i(S)$ and additionally, by submodularity, we have that $h_i(X)$ is a lower bound on $f_i(X)$ i.e. for every set $X\subseteq S$, $f_i(X)=\sum_{j:s_j\in X}f_i(s_j|X\cap S_{j-1})\geq h_i(X)$. Also, we can assume w.l.o.g.\ that $h_i(S)=1,\forall i$ and so it suffices to show that $h_i(X)\geq \frac{m-1}{k}, \forall i$.
		
		We proceed by induction on $m$. For the base case of $m=2$, we can pick elements $e_1,e_2\in S$ such that $h_i(e_i)\geq \frac{1}{k}$ for $i\in\{1,2\}$, and we are done. Now assume that the property holds for $m\leq p$ and we show it for $m=p+1$ by contradiction. Consider an arbitrary set $X_0$ of size $p$, such that $h_i(X_0)\geq\frac{p-1}{k}$. Such a set exists by assumption, and note that if for some $i$, say $i=1$, $h_1(X_0)\geq \frac{p}{k}$, we are done, since we can add an element $e\in S$ to $X_0$ such that $h_2(e+X_0)\geq \frac{p}{k}$. So we assume that $\frac{p-1}{k}\leq h_i(X_0)<\frac{p}{k},\forall i$ to set up the contradiction.
		
		Now, consider the reduced ground set $S-X_0$. Then we have that $h_i(S-X_0)> 1-\frac{p}{k},\forall i$ and since $|S-X_0|=k-p$, we have for the reduced ground set $S-X_0$, that there exists a set $X_1$ of size $p$ such that $h_i(X_1)\geq \frac{p-1}{k-p}h_i(S-X_0)> \frac{p-1}{k}$, by the induction assumption. Using our second assumption (for contradiction), we have that $h_i(X_1)<\frac{p}{k}$. We repeat this until $|S-\cup_j X_j|\leq p$. Let $X'=S-\cup_j X_j$, then since $h_i(S)=1\,\forall i$, we have $0<|X'|=p'\leq p$ and $h_i(X')> \frac{p'}{k}\,\forall i$. 
		Now using the induction assumption for $m=p+1-p'$ and ground set $S-X'$, we have a set $Y$ with $|Y|= p+1-p'$, such that $h_i(X'\cup Y)\geq h_i(X')+\frac{p-p'}{k-p'}(1-h_i(X'))>\frac{p}{k}\,, \forall i$, yielding a contradiction.\qed
		  \end{proof}
	
	Based on this lemma, consider the below generalized greedy algorithm,
	
	{\centering
		\vspace{-3mm}
		\begin{minipage}{\textwidth}
			\begin{algorithm}[H]
				\caption{Generalized Greedy Algorithm}
				\label{gengred}
				\textbf{input: } $m,V_1,V_2$
				\begin{algorithmic}[1]
					\State Initialize $A=\emptyset$
					\While{$|A|< k$}
					\State $m= \min\{m,k-|A|\}$ 
					\State Find $\Big\{X|\, f_i(X|A)\geq \frac{m-1}{k}[V_i-f_i(A)], X\subseteq N-A, |X|=m\Big\}$  by enumeration
					\State $A=A\cup X$
					\EndWhile
					\State Output:$ A$
				\end{algorithmic}
			\end{algorithm}
		\end{minipage}
		
		\centering}
	
	\begin{thm}\label{nemgen}
		If there exists a set $S$, $|S|= k$, such that $f_i(S)\geq V_i$, then Algorithm \ref{gengred} finds a set $A$, $|A|= l\leq k$, such that, $$f_i(A)\geq \Big(\beta\Big(0,\frac{l}{k}\Big)-1/\Theta(m)\Big)V_i,$$ by making $O(n^{m+1})$ queries. 
	\end{thm}
	\begin{proof}
	The analysis closely follows that of Lemma \ref{nem} in \cite{nem,nem1} and Appendix \ref{app1}. Let $A_j$ be the set of size $m\times j$ after iteration $j$, then using Lemma \ref{pareto} for monotone submodular functions $f'_i(.)=f_i(.|A_j)$ with $V'_i=V_i-f_i(A_j)$ we have that $\exists X,|X|\leq m$, such that $f_i(X|A_j)\geq \frac{m-1}{k}[V_i-f_i(A_j)]$. Hence, $f_i(A_{j+1})\geq f_i(A_j)+\frac{m-1}{k}[V_i-f_i(A_j)]$. Now similar to what is shown in \cite{nem,nem1}, this gives us that after $l$ iterations, 
	\begin{eqnarray*}
	f_i(A_l)\geq \Big(1-\Big(1-\frac{m-1}{k}\Big)^{l}\Big)V_i&&=\beta\Big(0,\frac{ml}{k}\Big)(1-1/\Theta(m))V_i\\
	&&=\Big(\beta\Big(0,\frac{ml}{k}\Big)-1/\Theta(m)\Big)V_i.\qed
\end{eqnarray*}
	\end{proof}
	Thus, Algorithm \ref{gengred} gives a deterministic $(1-1/e)-1/\Theta(m)$ approximation for bi-objective maximization of monotone submodular functions subject to cardinality constraints. We define this problem more generally and formally in Section \ref{consttau}, but for now, note that the problem of maximizing the minimum of two monotone submodular functions is a special case of the above, where we don't need to input $V_i$, and hence for Algorithm \ref{a0.63}, we have that while in Phase 1, after $l$ iterations,
	\begin{eqnarray} 
	&&g(A)=\min\{f(A-a_1),f(A-a_2)\}\notag\\
	&&\geq\Big(\beta\Big(0,\frac{l}{k}\Big)-1/\Theta(m)\Big) \min\{f(OPT(k,N,1)-a_1),f(OPT(k,N,1)-a_2)\}\notag\\
	&&\geq \Big(\beta\Big(0,\frac{l}{k}\Big)-1/\Theta(m)\Big)g(OPT(k,N,1))\label{afew}
	\end{eqnarray}

	Note that for Algorithm \ref{gengred}, Lemma \ref{key} applies, albeit with the additive $-1/\Theta(m)$ term. We now present the analysis of Algorithm \ref{a0.63}.
	\begin{thm}\label{0.63}
		Given $m\geq 2$, Algorithm \ref{a0.63} is $\beta(0,\frac{k-2m-2}{k})-1/\Theta(m)\xrightarrow{k\to\infty}(1-1/e)-1/\Theta(m)$ approximate, and makes $O(n^{m+1})$ queries.
	\end{thm}
\begin{proof}
	Let $A_0=\{a_1,a_2\}$. Consider the function $g^0(S)=\min_{i\in\{1,2\}}\{f(S-a_i)\}$ and let $z$ be a minimizer of output set $A$ as usual and define $z^0(S)=\arg\min_{i\in\{1,2\}}\{f(S-a_i)\}$. 
	Note that if $z^0(S)\in \mathcal{Z}(S)$ then $g^0(S)=g(S)$. Also, with the standard definition of marginal, note that for any set $S\cap A_0=\emptyset$, $g^0(S|X)\geq \min_{i\in\{1,2\}}\{f(S|X-a_i)\}\geq f(S|X)$. Let $U$ be the set added during Phase 1 and similarly $W$ during Phase 2. Also, let $U=\{u_1,\dots,u_p\}$, where each $u_i$ is a set of size $m$ and  similarly $W=\{w_1,\dots,w_r\}$, where each $w_i$ is a singleton. Let $OPT=g(OPT(k,N,1))$ and note that if $r=0$ i.e., Phase 2 doesn't occur, we have that $z\in A_0$ and using (\ref{afew}), $g(A)\geq \big[\beta(0,\frac{k-2}{k-1})-1/\Theta(m)\big]OPT$. So assume $r>0$ and also $p\geq 2$, since the case $p=1$ will be easy to handle later. 
	Now, let $f(z|A-z)=\eta f(A)$, and so $g(A)=(1-\eta)f(A)$. Using analysis similar to Lemma \ref{key}, we will focus on showing that, 
	\begin{eqnarray}
	f(A)\geq \Big[\beta\Big(3\eta,\frac{k-2m}{k-1}\Big)-1/\Theta(m)\Big]OPT\label{tsh}
	\end{eqnarray}
	Lemma \ref{subkey} then gives $g(A)\geq \big[\beta(0,\frac{k-2m}{k-1})-1/\Theta(m)\big]OPT$.
	
	During the rest of the proof, we sometimes ignore the $1/\Theta(m)$ term with the understanding that it is present by default. To show (\ref{tsh}), let $a_i=z^0(A_0\cup U)$. Then, observe that, 
	\begin{eqnarray}
	&f(A)=& g^0(A_0\cup (U-u_p)) +g^0(u_p|A_0\cup (U-u_p))\notag\\
	&& +f(a_i|(A_0\cup U)-a_i) +f(W|A_0\cup U)\label{biggie}
	\end{eqnarray}
	For the first term in (\ref{biggie}), we have 
	$$g^0(A_0\cup (U-u_p))= g^0(A_0\cup u_1)+g^0(\{u_2,\dots,u_{p-1}\}|A_0\cup u_1).$$ Then, due to the greedy nature of Phase 1, we have using Theorem \ref{nemgen} (ignoring $1/\Theta(m)$ term), 
	\begin{eqnarray}
	g^0(\{u_2,\dots,u_{p-1}\}|A_0\cup u_1)\geq \beta\Big(0,\frac{mp-2m}{k-1}\Big)(OPT-g^0(A_0\cup u_1))\label{(b)}
	\end{eqnarray}
	As usual, the $k$ in the denominator was replaced by $k-1$ because we compare the value to a set of size $k-1$ ($OPT=f(OPT(k,N,1)-z)$). Similarly, for the last term in (\ref{biggie}), we have,
	\begin{eqnarray}
	f(W|A_0\cup U)\geq \beta\big(0,\frac{r}{k-1}\big)(OPT-f(A_0\cup U))\label{(a)}
	\end{eqnarray}
	Now we make some substitutions, let $\Delta= g^0(A_0\cup u_1)+g^0(u_p|A_0\cup (U-u_p))+f(a_i|(A_0\cup U)-a_i)$, $\alpha_p=\frac{(m-1)(p-2)}{k-1},\alpha_r=\frac{r}{k-1}$. Then using $k=mp+r+2$ we get, $\alpha_p+\alpha_r=\frac{k-2m}{k-1}-1/\Theta(m)$. Also, $f(A_0\cup U)=\Delta+g^0(\{u_2,\dots,u_{p-1}\}|A_0\cup u_1)$. This gives us,
	\begin{eqnarray*}
		&&f(A)=\Delta + g^0(\{u_2,\dots,u_{p-1}\}|A_0\cup u_1) +f(W|A_0\cup U) \\
		&&\overset{(a)}{\geq} \Delta + g^0(\{u_2,\dots,u_{p-1}\}|A_0\cup u_1) + \beta(0,\alpha_r)[OPT-f(A_0\cup U)]\\
		&&\geq \Delta+ (1-\beta(0,\alpha_r))g^0(\{u_2,\dots,u_{p-1}\}|A_0\cup u_1)+\beta(0,\alpha_r)(OPT-\Delta)\\
		&&\overset{(b)}{\geq} \Delta+(1/e^{\alpha_r})\beta(0,\alpha_p)(OPT-g^0(A_0\cup u_1))+\beta(0,\alpha_r)(OPT-\Delta)\\
		&&\geq \Delta+ (\beta(0,\alpha_r)+\beta(0,\alpha_p)/e^{\alpha_r})[OPT-\Delta]\\
		&&= \Delta+\Big[\beta\Big(0,\frac{k-2m}{k-1}\Big)-1/\Theta(m)\Big][OPT-\Delta]
	\end{eqnarray*}
	where the last equality holds asymptotically, $(a)$ comes from (\ref{(a)}) and $(b)$ from (\ref{(b)}).
	Assume for the time being, that for any $x$, $3f(x|A-x)\leq \Delta$, which implies that $\Delta\geq 3\eta f(A)$. Armed with these inequalities, the same simple algebra as in the proof of Lemma \ref{key}, gives us (\ref{tsh}). More concretely, ignoring the $1/\Theta(m)$ term, we have,
	\begin{eqnarray*}
		&f(A) &\geq \Delta+\beta\Big(0,\frac{k-2m}{k-1}\Big)[OPT-\Delta]\\
		&	&\geq 3\eta f(A) \Big(1-\beta\Big(0,\frac{k-2m}{k-1}\Big)\Big) +\beta\Big(0,\frac{k-2m}{k-1}\Big)OPT\\
			\end{eqnarray*}
			\vspace{-10mm}
				\begin{eqnarray*}
					 (1-3\eta e^{-\frac{k-2m}{k-1}}) f(A)&&\geq (1-e^{-\frac{k-2m}{k-1}}) OPT\\
			 		\implies f(A)&&\geq \beta\Big(3\eta,\frac{k-2m}{k-1}\Big)OPT
	\end{eqnarray*}
	
	To finish the proof, we need to show $3f(x|A-x)\leq \Delta$. We break this down by first showing in two steps that for all $x$, $2f(x|A-x)\leq  g^0(A_0\cup u_1)=f(a_2)+g^0(u_1|A_0)$, followed by proving that $f(x|A-x)\leq g^0(u_p|A_0\cup (U-u_p))+f(a_i|(A_0\cup U)-a_i)$.
	
	\textit{Step 1}: for all $x$, $f(x|A-x)\leq f(a_2)$. To see this for $x\neq a_1$, note that $f(x|A-x)\leq f(x|A_0-x)$ and further, $f(x|A_0-x)\leq f(a_2|a_1)\leq f(a_2)$, where the first inequality is because $a_2$ adds maximum marginal value to $a_1$. For $x=a_1$, since Phase 1 ends, we have that $f(a_1|A-a_1)\leq f(a_1|a_2+U)\leq f(y|(A_0\cup U)-y)$ for some $y$ not in $A_0$ and then we have $f(y|(A_0\cup U)-y)\leq f(a_2)$.
	
	\textit{Step 2}: $f(x|A-x)\leq g^0(u_1|A_0)$. For $x\not\in A_0$, we have that $f(x|A-x)\leq f(x|A_0)\leq g^0(x|A_0)\leq g^0(u_1|A_0)$. For $x\in A_0$, from the fact that $A_0\cup U$ has a minimizer $y\not \in A_0$, we have that $f(x|A-x)\leq f(x|(A_0\cup U)-x)\leq f(y|(A_0\cup U)-y)$ and further $f(y|(A_0\cup U)-y)\leq f(y|A_0)\leq g^0(y|A_0)\leq g^0(u_1|A_0)$.
	
	Finally, we show that $f(x|A-x)\leq g^0(u_p|A_0\cup (U-u_p))+f(a_i|(A_0\cup U)-a_i)$, for all $x$. Let $a_j=z^0(A_0\cup (U-u_p))$ and observe that, 
	\begin{eqnarray*}
		&&g^0(A_0\cup U)-g^0(A_0\cup (U-u_p))+f(a_i|(A_0\cup U)-a_i)\\
		&&=f((A_0\cup U)-a_i)-f((A_0-a_j)\cup (U-u_p))+f(a_i|(A_0\cup U)-a_i)\\
		&&= f(A_0\cup U)-f((A_0-a_j)\cup (U-u_p))\\
		&& \geq f(a_j|(A_0-a_j) \cup (U-u_p))
	\end{eqnarray*} 
	Now, before adding $u_p$, we have that $g^0(.)=g(.)$ and in fact, $a_j$ is a minimizer of $A_0\cup (U-u_p)$, so clearly, for all $x\in A_0\cup (U-u_p)$, the desired is true. For $x\in u_p$, it is true since $f(u_p|A_0\cup (U-u_p))\leq g^0(u_p|A_0\cup (U-u_p))$. For $x\in W$, we have that $f(x|A-x)\leq f(x|A_0\cup (U-u_p))\leq g^0(x|A_0\cup (U-u_p))\leq g^0(u_p|A_0\cup (U-u_p)))$, and we are done.
	
	
	The case $p=1$ can be dealt with same as above, except that now $\Delta= g^0(A_0\cup u_1)+f(a_i|(A_0\cup u_1)-a_i)+f(w_1|A_0\cup u_1)=f(A_0\cup u_1\cup w_1)$. \qed
\end{proof}
	Before discussing a similar result for constant $\tau\geq 1$, we first describe a fast 0.387 approximation for $\tau=o(\sqrt{k})$, which we will use when showing a $(1-1/e)-\epsilon$ approximation for fixed $\tau$.
		
		\vspace{-1mm}
		\subsubsection{0.387 Algorithm for $\tau\ll \sqrt{k}$ in the possible absence of ``copies"}
		\vspace{-1mm}
		One can recast the first algorithm in Section \ref{copygen}, which greedily chooses $\{a_1,\dots,a_{k-2\tau^2}\}$ elements and adds $\tau$ copies for each of the first $2\tau$ elements. as greedily choosing $2\tau$ elements, ignoring them and choosing another $2\tau$ greedily (which will be copies of the first $2\tau$) and repeating this $\tau$ times in total, leading to a set which contains $A_{2\tau}$ and $\tau-1$ copies of each element in $A_{2\tau}$. Then, ignoring this set, we greedily add till we have $k$ elements in total. Thus, the algorithm essentially uses the greedy algorithm as a sub-routine $\tau+1$ times. Based on this idea, we now propose an algorithm for $\tau=o(\sqrt{k})$, which can also be viewed as an extension of the 0.387 algorithm for $\tau=1$. To be more precise, it achieves an asymptotic guarantee of 0.387 for $\tau=o(\sqrt{\frac{k}{c(k)}})$, where $c(k)$ is an input parameter that governs the trade off between how fast the guarantee approaches 0.387 as $k$ increases and how large $\tau$ can be for the guarantee to still hold. In fact, the guarantee is $0.387\Big(1-\frac{1}{\Theta(c(k))}\Big)$ with $c(k)$ being a function monotonically increasing in $k$ and approaching $\infty$ as $k \to \infty$. The factor also degrades proportionally to $1-\frac{\tau^2c(k)}{k}$, as $\tau$ approaches $\sqrt{\frac{k}{c(k)}}$. 
		
		{\centering
			\vspace{-5mm}
			\begin{minipage}{\textwidth}
				\begin{algorithm}[H]
					\caption{Algorithm for $\tau=o\Big(\sqrt{\frac{k}{ c(k)}}\Big)$}
					\label{greedytau}
					\begin{algorithmic}[1]
						\State Initialize $\tau'=c(k)\tau^2,A_0=A_1=X=\emptyset$.
						\While {$|A_0|<\tau'$}
						\While {$|X|<\tau'/\tau$} $X=X+\argmax{x\in N-(A_0\cup X)}f(x|X)$
						\EndWhile
						\State $A_0= A_0 \cup X$; $X=\emptyset$
						\EndWhile
						\While{$|A_1|< k-\tau'$} $A_1=A_1+\argmax{x\in N-(A_0\cup A_1)}f(x|A_1)$
						\EndWhile
						\vspace{-3mm}
						\State Output: $A_0\cup A_1$
					\end{algorithmic}
				\end{algorithm}
			\end{minipage}
			\vspace{-2mm}
			\centering}
		

		\begin{thm}\label{genanal}
			Algorithm \ref{greedytau} has an approximation ratio of $\frac{e-1}{2e-1+\frac{e-1}{c(k)}}=\frac{e-1}{2e-1}\big(1-\frac{1}{\Theta(c(k))}\big)\xrightarrow{k\to \infty} 0.387$ for $\tau=o\big(\sqrt{\frac{k}{c(k)}}\big)$.
		\end{thm}
		\begin{proof}
			Let $A=A_0\cup A_1$ be the output with $A_0,A_1$ as in the algorithm. Define $Z_0=A_0\cap Z$ and $Z_1=Z-Z_0=A_1\cap Z$. Let $OPT(k-\tau,N-Z_0,0)=A'_0\cup X$ where $A'_0=OPT(k-\tau,N-Z_0,0)\cap A_0$ and $X\cap A'_0=\emptyset$. Now note that,
			\begin{eqnarray*}
				f(A_0-Z_0)+f(OPT(k-\tau,N-A_0,0))&&\geq f(A'_0)+f(X)\notag\\
				&&\geq f(OPT(k-\tau,N-Z_0,0))\notag\\
				&&\geq g(OPT(k,N,\tau))\quad [\because \text{Lemma }\ref{f and g}]\notag
			\end{eqnarray*}
			Which implies,
			\begin{eqnarray}
			f(A_0-Z_0)\geq g(OPT(k,N,\tau))  -f(OPT(k-\tau,N-A_0,0))\label{first}
			\end{eqnarray}
			In addition,\begin{eqnarray}f(A_1)\geq \beta\Big(0,\frac{k-\tau'}{k-\tau}\Big)f(OPT(k-\tau,N-A_0,0))\label{second}\end{eqnarray}
			Next, index disjoint subsets of $A_0$ based on the loop during which they were added. So the subset added during loop $i$ is denoted by $A^i_0$, where $i\in\{1,\dots,\tau\}$. So the last subset consisting of $\tau c(k) $ elements is $A^{\tau}_0$. 
			
			Now, if $Z_0$ includes at least one element from each $A^i_0$ then $Z_1=\emptyset$ and for this case we have from (\ref{first}) and (\ref{second}) above,
			\begin{eqnarray*}
				f(A-Z)&&\geq \max\{f(A_0-Z_0),f(A_1)\}\\
				&&\geq \max\{g(OPT(k,N,\tau))-f(OPT(k-\tau,N-A_0,0)), \\
				&&\quad\quad\quad\beta\Big(0,\frac{k-\tau'}{k-\tau}\Big)f(OPT(k-\tau,N-A_0,0))\}\\
				&&\geq \frac{\beta\Big(0,\frac{k-\tau'}{k-\tau}\Big)}{1+\beta\Big(0,\frac{k-\tau'}{k-\tau}\Big)} g(OPT(k,N,\tau))\\
				&&\xrightarrow{k\to \infty}\frac{e-1}{2e-1}g(OPT(k,N,\tau))
			\end{eqnarray*} 
			Next, suppose that $|Z_1|> 0$, then there is some $A^j_0$ such that $A^j_0\cap Z=\emptyset$. Further let $f(Z_1|A-Z)=\eta f(A_1)$, then since $|A^j_0|\geq c(k)|Z_1|$, similar to (\ref{iter}), we have due to greedy iterations and submodularity, $f(A^j_0)\geq c(k)f(Z_1|A-Z)=c(k) \eta f(A_1)$. Also note that,
			\begin{eqnarray}
			f(A-Z)\geq f(A_1-Z_1)\geq f(A_1)-f(Z_1|A-Z)\geq (1-\eta)f(A_1)\label{third}
			\end{eqnarray} 
%
			Moreover, let $A'_1$ be the set of first $\tau$ elements of $A_1$. Then, due to greedy iterations we have $f(A'_1)\geq f(Z_1|A_1-Z_1)\geq\eta f(A_1)$. Thus, from Lemma \ref{key}, with $N$ replaced by $N-A_0$, $k$ replaced by $k-\tau$, $S=A'_1$ with $c=\eta$ and $l=k-|S|=k-\tau'-\tau$, we have,
			\begin{eqnarray}f(A_1)\geq \beta(\eta,\frac{k-\tau'-\tau}{k-\tau})f(OPT(k-\tau,N-A_0,0))\label{fourth}\end{eqnarray}
			From (\ref{first}), (\ref{third}) and (\ref{fourth}),
			\begin{eqnarray*}
				f(A-Z)&&\geq \max\{f(A_0-Z_0),(1-\eta)f(A_1),f(A^j_0)\}\\
				&&\geq \max\{f(A_0-Z_0),(1-\eta)f(A_1), c(k)\eta f(A_1)\}\\
				&&\overset{(a)}{\geq} \max\{g(OPT(k,N,\tau))-f(OPT(k-\tau,N-A_0,0)),\\
				&&\quad \quad\quad  \frac{c(k)}{1+c(k)} \beta(\frac{1}{1+c(k)},\frac{k-\tau'-\tau}{k-\tau})f(OPT(k-\tau,N-A_0,0))\}\\
				&&\geq \frac{\frac{c(k)}{1+c(k)}\beta\Big(\frac{1}{1+c(k)},\frac{k-\tau'-\tau}{k-\tau}\Big)}{1+\frac{c(k)}{1+c(k)}\beta\Big(\frac{1}{1+c(k)},\frac{k-\tau'-\tau}{k-\tau}\Big)} g(OPT(k,N,\tau))\\
				&&\xrightarrow{k\to \infty}\frac{e-1}{2e-1}g(OPT(k,N,\tau))
			\end{eqnarray*} 
			where $(a)$ follows by substituting $\eta=\frac{1}{1+c(k)}$.\qed
			\end{proof}
		\subsubsection{$(1-1/e)-\epsilon$ algorithm for constant $\tau$}\label{consttau}
		In Section \ref{gensec}, inspired by the 2-Copy algorithm, we derived a $(1-1/e)-1/\Theta(m)$ approximation for the general case, using a phase wise approach. In the first phase, while the minimizers are restricted to the set $A_0=\{a_1,a_2\}$, we build a set robust to removal of either of these elements by deriving and using an algorithm for bi-objective maximization of monotone submodular functions. In the second and final phase, we filled in the rest of the set with standard greedy iterations (like Algorithm \ref{greedy}). 
		
		This phase wise approach however, doesn't generalize well for $\tau>1$ since we can have a minimizer that intersects with the initial set but is not a subset of the initial set, unlike for $\tau=1$ where a minimizer $z$ is either in $\{a_1,a_2\}$ or not. Instead, an alternative approach comes from reinterpreting the result from $\tau=1$, where, we want to build a set that has a large value on both $f_1(.)=f(.|a_1)$ and $f_2(.)=f(.|a_2)$ simultaneously, to deal with the scenarios when either of these elements is a minimizer. Also, we can capture the notion of continuing greedily w.r.t. $f(.)$ once the set becomes robust to removal of either of $a_1$ or $a_2$, by considering a third function $f_3(.)=f(.|\{a_1,a_2\})$. Thus, instead of separate phases, we can think about a single multi-objective problem over three monotone submodular functions $f_1,f_2,f_3$, and try to add a set $A_1$ to $A_0=\{a_1,a_2\}$, such that $f_i(A_1)\geq (1-1/e)f_i(OPT(k,N,1)), \forall i$. To see why this serves our purposes, consider the scenario where $a_2$ is a minimizer for the final set $A$,
		 \begin{eqnarray*}
		 	&g(A)=f(A_1+A_0-a_2)&=f_1(A_1)+f(a_1)\\
		 	&&\geq f(a_1)+(1-1/e)(f(OPT(k,N,1))-f(a_1))\\
		 	&&\geq (1-1/e)g(OPT(k,N,1)).
		 	\end{eqnarray*}
		  Generalizing this for larger $\tau$, we start with the set $A_0$ obtained by running Algorithm \ref{greedytau} with $\tau'=3\tau^2$, implying $|A_0|=3\tau^2$. 
		Now, consider the monotone submodular functions $f_i(.)=f(.|Y_i)$ for every possible subset $Y_i$ ($|Y_i|\geq 3\tau^2-\tau$) of $A_{0}$ and denote the set of functions by $\mathcal{L}$ ($|\mathcal{L}|=2^{\tilde{\Theta}(\tau)}$, large but constant). 
		
		Assuming there exists a set $S$ of size $k-3\tau^2$, such that $f_i(S)\geq (1-\Theta(\frac{1}{k})) \big[g(OPT(k,N,\tau))-f(Y_i)\big]$, we would like to solve an instance of a multi-objective submodular maximization problem (made more precise later) to find a set $A_1$ of size $k-3\tau^2$ such that $f_i(A_1)\geq \beta(0,1)
		(1-\Theta(\frac{1}{k})) \big[g(OPT(k,N,\tau))-f(Y_i)\big]$. We know that $OPT(k,N,\tau)$ is a set such that, $f_i(OPT(k,N,\tau))\geq f(OPT(k,N,\tau))-f(Y_i)\geq g(OPT(k,N,\tau))-f(Y_i)$, however it is a set of size $k$. So before we can puzzle out how to find set $A_1$, we need the proof of existence of set $S$. 
		\begin{lem}
			Given a constant number of monotone submodular functions $\mathcal{L}=\{f_i,\dots,f_l\}$ and values $\mathcal{V}=\{V_i,\dots,V_l\}$ and a set $S'$ of size $k$, such that $f_i(S')\geq V_i,\forall i$. There exists a set $S$ of size $k-p$ such that,
			 $$f_i(S)\geq \Big(1-\frac{l}{k-(p+l-1)}\Big)^{p} V_i=\Big(1-\Theta(1/k)\Big)V_i$$ for constant $l,p\ll k$.  
		\end{lem} 
		\begin{proof}
			This is clearly true for $l=1$ since $\exists S\subset S', |S|=k-p$, such that, $f(S)\geq \frac{k-p}{k}f(S')$. More generally, we show that $\exists S_1\subset S',|S_1|=k-1$, such that $f_i(S_1)\geq \frac{k-2l}{k-l}f_i(S')$. Then we can reapply this on $S_1$ to get $S_2\subset S_1, |S_2|=k-2$ and $f_i(S_2)\geq \frac{k-2l-1}{k-l-1}f_i(S_1)$. Repeating this $p$ times over all, we get $S_p\subset S',|S_p|=k-p$ and $f_i(S_p)\geq \Pi_{j=0}^{p-1} \frac{k-j-2l}{k-j-l}V_i\geq \big(\frac{k-p-2l+1}{k-p-l+1}\big)^pV_i$, which gives the desired. So, it remains to show the claimed lower bound on $f_i(S_1)$. 
			
			Let $S'=\{s'_1,\dots,s'_k\}$ and let $S'_j=\{s'_1,\dots,s'_j\}$. Similar to the argument in Lemma \ref{pareto}, it suffices to show this for modular functions $h_i$ where $h_i(S')=1$ and  $h_i(s'_j)=f_i(s'_j|S'_{j-1}), \forall i$. W.l.o.g., assume  that $S'$ is indexed such that $h_1(s'_j)\leq h_1(s'_{j+1})$. Consider the set $S'_{1+\ceil{\frac{l-1}{l}k}}$ of the $1+\ceil{\frac{l-1}{l}k}$ smallest elements elements w.r.t. $h_1$. There are at most $(k-\ceil{\frac{l-1}{l}k})(l-1)\leq \ceil{\frac{l-1}{l}k}< |S'_{1+\ceil{\frac{l-1}{l}k}}|$ distinct elements which are in the top $(k-\ceil{\frac{l-1}{l}k})=\floor{\frac{k}{l}}$ for some function $h_i, i\geq 2$. Hence $\exists j_0\leq 1+\ceil{\frac{l-1}{l}k}$ such that $s'_{j_0}$ is one of the $\ceil{\frac{l-1}{l}k}$ smallest elements for each function $h_i, i\geq 2$. 
			This implies that if we remove $s'_{j_0}$, we have $h_i(S'-s'_{j_0})\geq 1-\frac{1}{\floor{\frac{k}{l}}}\geq \frac{k-2l}{k-l},\forall i$.\qed
		\end{proof}
		Now that we know that set $S$ of size $k-3\tau^2$ satisfying the desired inequalities exists, we need an algorithm to find a set $A_1$ that $(1-1/e)$-approximates the value of $S$ for each $f_i$. While Algorithm \ref{gengred} only works for up to two functions, as mentioned in the beginning, a randomized $(1-1/e-\epsilon)$ algorithm for a more general problem was given in \cite{swap} and we use it here.
		\begin{lem}[Chekuri, Vondr{\'a}k, Zenklusen \cite{swap}]
			Given constant number of monotone submodular functions $\mathcal{L}=\{f_i,\dots,f_l\}$ and values $\mathcal{V}=\{V_i,\dots,V_l\}$. If there exists a set $S$ ($|S|\leq k$) with $f_i(S)\geq V_i,\forall i$, there is a polynomial time algorithm which finds a set $X$ of size $\leq k$ such that $f_i(X)\geq(1-1/e-\epsilon) V_i,\forall i$, with constant probability, for a constant $\epsilon\leq \frac{1}{\log l}$, making $O(n^{1/\epsilon^3})$ queries. If a set $S$ with the value lower bounds given by $\mathcal{V}$ doesn't exist, then the algorithm gives a certificate of non-existence. \end{lem}
		Denote the algorithm by $\mathcal{A}$ and let the set output be $\mathcal{A}(\mathcal{V},\mathcal{L})$, for inputs $\mathcal{V},\mathcal{L}$ as described above. To ease notation, from here on we generally ignore the $\epsilon$ term.

%
				
		A final hurdle in using the above is that we need to input the values $V_i=g(OPT(k,N,\tau))-f(Y_i)$ and hence, we need an estimate of $OPT=g(OPT(k,N,\tau))$. We can overestimate $OPT$ as long as the problem remains feasible. However, underestimating $OPT$ results in a loss in guarantee. 
		Using Algorithm \ref{greedytau}, we can quickly find lower and upper bounds  $lb,ub$ such that $lb\leq OPT\leq ub=lb/0.387$ and then run the multi-objective maximization algorithm above with a geometrically increasing sequence of $O(1/\log(1+\delta))$ many values to get an estimate $OPT'$ within factor $(1\pm\delta)$ of $OPT$.
		
		\emph{In summary, our scheme starts with the set $A_0$ of size $3\tau^2$, obtained by running Algorithm \ref{greedytau} with $\tau'=3\tau^2$, then uses the algorithm for multi-objective optimization as a subroutine to find the estimate $OPT'\geq (1-\delta) OPT$ and simultaneously a set $A_1=\mathcal{A}(\{OPT'-f(Y_i)\}_i,\mathcal{L})$ of size $k-3\tau^2$, such that $f_i(A_1)\geq (1-1/e)(1-\Theta(1/k))(OPT'-f(Y_i))$.} 
		
		\emph{The final output is $A=A_{0}\cup A_1$ and we next show that this is asymptotically $(1-1/e-\epsilon)$ approximate.}
		
		\begin{proof}
			We ignore the $\epsilon$ and $\Theta(1/k)$ terms to ease notation. First, note that if $Z\subseteq A_{0}$, then $f(.|A_0-Z)\in \mathcal{L}$ gives us, $f(A_1|A_0-Z)\geq(1-1/e) (OPT'-f(A_0-Z))$. Hence,
			\begin{eqnarray*}
			g(A)&&=f(A_0-Z)+f(A_1|A_0-Z)\\
			&&\geq f(A_0-Z)+(1-1/e) (OPT'-f(A_0-Z))\\
			&&\geq (1-1/e)OPT'. 
			\end{eqnarray*} 
			
			If $Z\not\subseteq A_{0}$, then let $Z_1=Z\cap A_1$ and $Z_0=Z-Z_1$. Similar to the proof of Theorem \ref{genanal}, let $A^i_0$ denote the $i$th set of $3\tau$ elements greedily chosen for constructing $A_0$, $i\leq \tau$. Since $|Z_0|<\tau$, $\exists i$ such that $A^i_0\cap Z_0=\emptyset$. Then analogous to the proof of Theorem \ref{genanal}, we have due to greedy additions, 
			\begin{eqnarray}
			f(A_0-Z_0)\geq f(A^i_0)\geq 3f(Z_1|A-Z)\label{rnow}
			\end{eqnarray}
			 (in contrast with $c(k)f(Z_1|A-Z)$ in Theorem \ref{genanal}). Now, since $f(.|A_{0}-Z_0)$ is one of the functions in $\mathcal{L}$, we have $f(A_1|A_0-Z_0)\geq(1-1/e) (OPT'-f(A_{0}-Z_0))$ which implies, $$f(A-Z_0)=f(A_0-Z_0)+f(A_1|A_0-Z_0)\geq f(A_{0}-Z_0)+(1-1/e) (OPT'-f(A_{0}-Z_0)).$$ Further, letting $f(Z_1|A-Z)=\eta f(A-Z_0)$ and using (\ref{rnow}),
			\begin{eqnarray*}
				 f(A-Z_0)&&\geq \frac{3\eta}{e} f(A-Z_0)+ (1-1/e)OPT'\\
				 &&\geq \beta(3\eta,1)OPT'.
			\end{eqnarray*}
			Now, using Lemma \ref{subkey} we have $g(A)=f(A-Z)\geq (1-\eta)f(A-Z_0)\geq \beta(0,1)OPT'$. \qed
			
		\end{proof}
	Finally, consider the following generalization of Lemma \ref{afew},
			\begin{cnj}\label{combi}
				Given $l\geq 1$ (treated as a constant) monotone submodular functions $f_1,\dots,f_l$ on ground set $N$, a set $S\subseteq N$ of size $k$, such that $f_i(S)\geq V_i$ for all $i\in[l]$ and an arbitrary $m$ with $k\geq m\geq l$, there exists a set $X\subseteq S$ of size $ m$, such that, $$f_i(X)\geq \frac{m-\Theta(1)}{k}V_i,\quad \forall i\in[l]$$ 
			\end{cnj}
			If true, the above would give a greedy deterministic $(1-1/e)-1/\Theta(m)$ approximation for the multi-objective optimization problem and thus also for our robust problem, for constant $\tau$.
	
	\vspace{-5mm}
	
	\section{Extension to general constraints}
	\vspace{-2mm}
	So far, we have looked at a robust formulation of $P1$, where we have a cardinality constraint. However, there are more sophisticated  applications where we find instances of budget or even matroid constraints. 
	In particular, consider the generalization $\underset{A\in \mathcal{I}}{\max}\quad \underset{|B|\leq\tau}{\min}\, f(A\backslash B)$, for some independence system $\mathcal{I}$. By definition, for any feasible set $A\in \mathcal{I}$, all subsets of the form $A\backslash B$ are feasible as well, so the formulation is sensible. Let's briefly discuss the case of $\tau=1$
	and suppose that we are given an $\alpha$ approximation algorithm $\mathcal{A}$, with query/run time $O(R)$ for the $\tau=0$ case. Let $G_0$ denote its output and $z_0$ be a minimizer of $G_0$.  Consider the restricted system $\mathcal{I}_{z_0}=\{A:z_0\in A, A\in \mathcal{I}\}$. Now, in order to be able to pick elements that have small marginal on $z_0$ but large value otherwise, we can generalize the notion of ignoring $z_0$ by maximizing the monotone submodular function $f(.\backslash z_0)$ subject to the independence system $I_{z_0}$. However, unlike the cardinality constraint case, where this algorithm gives a guarantee of 0.387, the algorithm can be arbitrarily bad in general (because of severely restricted $I_{z_0}$, for instance). We tackle this issue by adopting an enumerative procedure.
	%
	
	Let $\mathcal{A}_j$ denote the algorithm for $\tau=j$ and let $\mathcal{A}_j(N,Z)$ denote the output of $\mathcal{A}_j$ on ground set $N$ and subject to restricted system $\mathcal{I}_Z$. 
	Finally, let $\hat{z}(A)=\argmax{x\in A} f(x)$. With this, we have for general constraints:
	
	{\centering
		\vspace{-3mm}
		\begin{minipage}{\textwidth}
			\begin{algorithm}[H]
				\caption{$\mathcal{A}_{\tau}: \frac{\alpha}{\tau+1}$ for General Constraints}
				\label{gen2a}
				\begin{algorithmic}[1]
					\State Initialize $i=0,Z=\emptyset$
					\While {$N-Z\neq \emptyset$}
					\State $G_i=\mathcal{A}_0(N-Z,\emptyset)$
					\State $z_i\in \hat{z}(G_i)$; \quad $Z=Z\cup z_i$
					\State $M_i=z_i\cup \mathcal{A}_{\tau-1}(N-Z,z_i)$; \quad $i=i+1$
					\EndWhile
					\vspace{-0.5mm}
					\State Output: $\text{argmax}\{ g_{\tau}(S)|S\in\{G_j\}_{j=0}^{i} \cup \{M_j\}_{j=0}^{i}\}$
				\end{algorithmic}
			\end{algorithm}
		\end{minipage}
		\centering}
	
	To understand the basic idea behind the algorithm, assume that $z_0$ is in an optimal solution for the given $\tau$. Then, given the algorithm $\mathcal{A}_{\tau-1}$, if a minimizer of the set $M_0=z_0 \cup \mathcal{A}_{\tau-1}(N-z_0,z_0)$ includes $z_0$, it only removes $\tau-1$ elements from $\mathcal{A}_{\tau-1}(N-z_0,z_0)$. On the other hand, if a minimizer doesn't include $z_0$,  $g_{\tau}(M_0)\geq f(z_0)\geq \frac{f(M_0)-g_{\tau}(M_0)}{\tau}$. These two cases yield the desired ratio, however, since $z_0$ need not be in an optimal solution, we systematically enumerate.

	\begin{thm}\label{genlem}
		Given an $\alpha$ approximation algorithm $\mathcal{A}$ for $\tau=0$ with query time $O(R)$, algorithm $\mathcal{A}_{\tau}$ described above guarantees ratio $\frac{\alpha}{\tau+1}$ for general $\tau$ with query time $O(n^{\tau} R+n^{\tau+1})$
	\end{thm}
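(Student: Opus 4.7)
The plan is to induct on $\tau$. The base case $\tau=0$ is the assumed guarantee on $\mathcal{A}=\mathcal{A}_0$. For the inductive step I assume $\mathcal{A}_{\tau-1}$ is $\alpha/\tau$-approximate on arbitrary independence systems, fix an optimal $\tau$-robust set $OPT^*\in\mathcal{I}$ with value $V^*=g_\tau(OPT^*)$, and track the execution of $\mathcal{A}_\tau$. The outer loop moves one element of $N$ into $Z$ per iteration and terminates only when $N-Z=\emptyset$, so every element of $N$ eventually becomes some $z_i$. Let $i^*$ be the first iteration at which $z_{i^*}\in OPT^*$; by minimality $Z\cap OPT^*=\emptyset$ at the start of that iteration, so $OPT^*\subseteq N-Z$ and $OPT^*$ is feasible for both inner calls of this iteration.

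Three estimates drive the analysis at $i^*$. First, $G_{i^*}=\mathcal{A}_0(N-Z,\emptyset)$ is an $\alpha$-approximation of $\max_{A\in\mathcal{I},\,A\subseteq N-Z}f(A)$ and $OPT^*$ lies in that feasible region, so $f(G_{i^*})\geq\alpha f(OPT^*)\geq\alpha V^*$. Second, because $z_{i^*}$ is a maximum-value element of $G_{i^*}$, submodularity gives $f(G_{i^*})-g_\tau(G_{i^*})\leq\tau f(z_{i^*})$, hence $g_\tau(G_{i^*})\geq\alpha V^*-\tau f(z_{i^*})$. Third, the sub-problem solved by $\mathcal{A}_{\tau-1}(N-Z,z_{i^*})$ looks for an extension $A^{sub}$ with $z_{i^*}\cup A^{sub}\in\mathcal{I}$ maximizing $g_{\tau-1}(A^{sub})$; the candidate $OPT^*-z_{i^*}$ is feasible and satisfies $g_{\tau-1}(OPT^*-z_{i^*})\geq g_\tau(OPT^*)=V^*$, since restricting the $\tau$-minimizer of $OPT^*$ to removals that contain $z_{i^*}$ can only increase the min, so by the inductive hypothesis $g_{\tau-1}(A^{sub})\geq(\alpha/\tau)V^*$.

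It remains to split on whether the $\tau$-minimizer $Z^*$ of $M_{i^*}=z_{i^*}\cup A^{sub}$ contains $z_{i^*}$. If $z_{i^*}\in Z^*$ then $M_{i^*}-Z^*=A^{sub}-(Z^*-z_{i^*})$ with $|Z^*-z_{i^*}|=\tau-1$, so $Z^*-z_{i^*}$ is a valid $(\tau-1)$-removal from $A^{sub}$ and $g_\tau(M_{i^*})\geq g_{\tau-1}(A^{sub})\geq(\alpha/\tau)V^*\geq\alpha V^*/(\tau+1)$. If $z_{i^*}\notin Z^*$ then $z_{i^*}\in M_{i^*}-Z^*$, so monotonicity gives $g_\tau(M_{i^*})\geq f(z_{i^*})$; combining with the second estimate and using that the algorithm outputs $\arg\max$ of $g_\tau$ over $\{G_j\}\cup\{M_j\}$,
\[
ALG\;\geq\;\max\{f(z_{i^*}),\;\alpha V^*-\tau f(z_{i^*})\}\;\geq\;\alpha V^*/(\tau+1),
\]
where the final inequality holds because the two arguments of the $\max$, viewed as functions of $t=f(z_{i^*})\geq 0$, cross at $t=\alpha V^*/(\tau+1)$. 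Either way $ALG\geq\alpha V^*/(\tau+1)$, closing the induction.

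Finally, the outer loop runs $O(n)$ times and each iteration makes one call to $\mathcal{A}_0$ (cost $R$) and one to $\mathcal{A}_{\tau-1}$, so $T(\tau)=O(n)(R+T(\tau-1))$, which unrolls to $O(n^\tau R)$; evaluating $g_\tau$ on each of the $O(n)$ candidates in the final $\arg\max$ requires enumerating $O(n^\tau)$ size-$\tau$ subsets, contributing $O(n^{\tau+1})$. The subtlest point is to pin down the semantics of the recursive call $\mathcal{A}_{\tau-1}(N-Z,z_{i^*})$ so that its guarantee is phrased as $g_{\tau-1}(A^{sub})\geq(\alpha/\tau)V^{sub}$ on the extension $A^{sub}$ alone, which is what lets the first-case chain $g_\tau(M_{i^*})\geq g_{\tau-1}(A^{sub})\geq(\alpha/\tau)V^*$ close without any additive $f(z_{i^*})$ correction.
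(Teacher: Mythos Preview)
Your proof is correct and follows the same inductive skeleton as the paper: fix the first iteration $i^*$ with $z_{i^*}\in OPT^*$, use the recursive call to $\mathcal{A}_{\tau-1}$ when the minimizer of $M_{i^*}$ contains $z_{i^*}$, and otherwise balance $G_{i^*}$ against $M_{i^*}$.

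The one noteworthy difference is in how the ``$z_{i^*}\notin Z^*$'' branch is handled. The paper first splits on whether $z_0$ lies in the minimizer $Z_0$ of $G_0$: if not, it argues $g_\tau(G_0)\geq\max\{f(Z_0)/\tau,\,f(G_0)-f(Z_0)\}\geq f(G_0)/(\tau+1)$; if so (and $G_0$ is not already good), it deduces $f(z_0)\geq f(G_0)/(\tau+1)$ and then uses $g_\tau(M_0)\geq f(z_0)$. You collapse both sub-cases into the single inequality $g_\tau(G_{i^*})\geq \alpha V^*-\tau f(z_{i^*})$ (via $f(G_{i^*})-g_\tau(G_{i^*})\le f(Z_0)\le \tau f(z_{i^*})$) and pair it with $g_\tau(M_{i^*})\geq f(z_{i^*})$, taking the max. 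This is a clean streamlining of the same idea rather than a different route; it buys you a shorter, case-free argument for that branch at no cost.
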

	\begin{proof}
		We proceed via induction on $j\in\{0,\dots,\tau\}$. Clearly, for $j=0$, $\mathcal{A}_0\equiv \mathcal{A}$, and the statement holds. Assume true for $j\in\{0,1,\dots,\tau-1\}$, then we show validity of the claim for $\mathcal{A}_{\tau}$. The query time claim follows easily since the while loop runs for at most $n$ iterations and each iteration makes $O(n^{\tau}+n^{\tau-1}R)$ queries (by assumption on query time of $\mathcal{A}_{\tau-1}$) and updating the best solution at the end of each iteration (counts towards the final output step) takes $O(n^{\tau})$ time to find the minimizer by brute force for two sets $G_i$ and $M_i$.
		
		Now, let $OPT(\mathcal{I},N,\tau)$ denote an optimal solution to $\max_{A\in \mathcal{I}}\min_{|B|=\tau}f(A-B)$ on ground set $N$ and assume that $z_0\in OPT(\mathcal{I},N,\tau)$. For any minimizer $B$ of $A$, we have for every element $z\in\hat{z}(A)$, $f(z)\geq \frac{f(B)}{\tau}\geq \frac{f(A)-f(A-B)}{\tau}$. Let $Z_0$ denote a minimizer of $G_0$. Hence, if $z_0\not\in Z_0$, we have that $g_{\tau}(G_0)\geq f(z_0)\geq \frac{f(G_0)-g_{\tau}(G_0)}{\tau}$, 
		giving us $g_{\tau}(G_0)\geq \frac{f(G_0)}{\tau+1}$. 
		Instead if $z_0$ is in the minimizer of $G_0$ and if $f(G_0-Z_0)< \frac{f(G_0)}{\tau+1}$, then we have that $f(Z_0|G_0-Z_0)\geq \frac{\tau}{\tau+1}f(G_0)$, implying that $f(z_0)\geq \frac{f(G_0)}{\tau+1}$. Now, let $Z'_0$ denote the minimizer of $M_0$ and note that if $z_0\not\in Z'_0$, we are done. Else, we have that, 
		\begin{eqnarray*}
			g_{\tau}(M_0)=g_{\tau-1}(M_0-z_0)&&\geq \frac{\alpha}{\tau} g_{\tau-1}(OPT(\mathcal{I}_{z_0},N-z_0,\tau-1))\\
			&&\geq \frac{\alpha}{\tau} g_{\tau-1}(OPT(\mathcal{I},N,\tau)-z_0)\\
			&&\geq \frac{\alpha}{\tau} g_{\tau}(OPT(\mathcal{I},N,\tau))> \frac{\alpha}{\tau+1} g_{\tau}(OPT(\mathcal{I},N,\tau))
		\end{eqnarray*}
		Where the first inequality stems from the induction assumption, the second and third by our assumption on $z_0$. This was all true under the assumption that $z_0\in OPT(\mathcal{I},N,\tau)$, if that is not the case, we remove $z_0$ from the ground set and repeat the same process. The algorithm takes the best set out of all the ones generated, and hence there exists some iteration $l$ such that $z_l\in OPT(\mathcal{I},N,\tau)$ and analyzing that iteration as we did above, gives us the desired.    
		
		Finally, for the cardinality constraint case, we can avoid enumeration altogether and the simplified algorithm has runtime polynomial in $(n,\tau)$ and guarantee that scales as $\frac{1}{\tau}$, which for $\Omega(\sqrt{k})\leq\tau=o(k)$, is a better guarantee than the na\"{\i}ve one of $\frac{1}{k-\tau}$ from Section \ref{ggreed}.\qed
		%
	\end{proof}
	\section{Conclusion, Open Problems and Further Work} 
	\vspace{-2mm}
	We looked at a robust version of the classical monotone submodular function maximization problem, where we want sets that are robust to the removal of any $\tau$ elements. We introduced the special, yet insightful case of copies, for which we gave a fast and asymptotically $(1-1/e)$ approximate algorithm for $\tau=o(k)$. 
	
	For the general case, where we may not have copies, we gave a deterministic asymptotically $(1-1/e-1/\Theta(m))$ algorithm for $\tau=1$, with the runtime scaling as $n^{m+1}$. As a byproduct, we also developed a deterministic $(1-1/e)-1/\Theta(m)$ approximation algorithm for bi-objective monotone submodular maximization, subject to cardinality constraint. For larger but constant $\tau$, we gave a randomized $(1-1/e)-\epsilon$ approximation and conjectured that this could be made deterministic. Additionally, we also gave a fast and practical 0.387 algorithm for $\tau=o(\sqrt{k})$. Note that here, unlike in the special case of copies, we could not tune the algorithm to work for larger $\tau$ and in fact, there has been further work in this direction since the appearance of the conference and arXiv version of this paper. Notably, \cite{bog} generalizes the notion of geometrically reducing the number of copies from Section \ref{copygen}, and achieves a 0.387 approximation for $\tau=o(k)$. It is still open whether one can go all the way up to $(1-1/e)$ and no constant factor approximation or inapproximability result is known for $\tau=\Theta(k)$. 
	
	Finally, similar robustness versions can be considered for maximization subject to independence system constraints and we gave an enumerative black box approach that leads to an $\frac{\alpha}{\tau+1}$ approximation algorithm with query time scaling as $n^{\tau+1}$, given an $\alpha$ approximation algorithm for the non-robust case. 
	\vspace{-5mm}
	\section*{Acknowledgement}
	\vspace{-3mm}
	The authors would like to thank all the anonymous reviewers for their useful suggestions and comments on all the versions of the paper so far. In addition, 
	RU would also like to thank Jan Vondr{\'a}k for a useful discussion and pointing out a relevant result in \cite{swap}.
	\vspace{-2mm} 
	\bibliographystyle{plain}
	\bibliography{ipco}
	
	%
	%
	%
	%
	%
	\appendix 
	  \section{Appendix} 
	  \subsection{}\label{app1} 
	\begin{lem}[Nemhauser, Wolsey \cite{nem,nem1}]
		 For all $\alpha\geq 0$, greedy algorithm terminated after $\alpha k$ steps yields a set $A$ with $f(A)\geq \beta(0,\alpha)f(OPT(k,N,0))$.
	\end{lem}
	\begin{proof} 
	Let $A_i$ be the set at iteration $i$ of the greedy algorithm. Then by monotonicity, we have,
	$$f(A_i\cup OPT(k,N,0))\geq f(OPT(k,N,0))$$
	and by submodularity, 
	$$\sum_{e\in OPT(k,N,0)-A_i} f(e|A_i)\geq f(OPT(k,N,0)|A_i)\geq f(OPT(N,k,0))-f(A_i)$$
	 Hence, there exists an element $e$ in $OPT(k,N,0)-A_i$ such that,
	$$f(e|A_i)\geq (f(OPT(k,N,0))-f(A_i))/k$$
	Hence, we get the recurring inequality, 
	$$f(A_{i+1})\geq f(A_{i}+e)\geq f(A_i)+(f(OPT(k,N,0))-f(A_i))/k$$
	The above implies that the difference between the value of the greedy set and optimal solution decreases by a factor of $(1-1/k)$ at each step, so after $\alpha k$ steps, 
	\begin{eqnarray*}
	f(A_{\alpha k})&&\geq (1-(1-1/k)^{\alpha k})f(OPT(k,N,0))\\
	\implies f(A_{\alpha k})&&\geq \beta(0,\alpha) f(OPT(k,N,0))
	\end{eqnarray*}
	\end{proof}
	\subsection{}\label{app1.1}
		\begin{lem}
			There exists no polytime algorithm with approximation ratio greater than $(1-1/e)$ for $P2$ unless $P=NP$. For the value oracle model, we have the same threshold, but for algorithms that make only a polynomial number of queries.
		\end{lem}
		\begin{proof}
			We will give a strict reduction from the classical problem $P1$ (for which the above hardness result holds \cite{nem,hard}) to the robust problem $P2$. Consider an instance of $P1$, denoted by $(k,N,0)$. We intend to reduce this to an instance of $P2$ on an augmented ground set $N\cup X$ i.e.\ $(k+\tau,N\cup X,\tau)$.
			
			The set $X=\{x_1,\cdots,x_\tau\}$ is such that $f(x_i)=(k+1)  f(a_1)$ (and recall that $f(a_1)\geq f(a_i),\forall a_i\in N$) and $f(x_i|S)=f(x_i)$ for every $i$ and $S\subset N\cup X$ not containing $x_i$. We will show that $g(OPT(k+\tau,N\cup X,\tau))=f(OPT(k,N,0))$. 
			
			First, note that for an arbitrary set $S=S_N \cup S_X$, such that $|S|=k+\tau$ and $S_X=S\cap X$, we have that every minimizer contains $S_X$. This follows by definition of $X$, since for any two subsets $P,Q$ of $S$ with $|P|=|Q|= k$ and $P$ disjoint with $X$ but $Q\cap X \neq \emptyset$, we have by monotonicity $f(Q)\geq f(x_i)= (k+1)f(a_1)> kf(a_1)$ and by submodularity $kf(a_1) \geq f(P)$. This implies that $X$ is the minimizer of $OPT(k,N,0)\cup X$ and hence $f(OPT(k,N,0))\leq g(OPT(k+\tau,N\cup X,\tau))$ 
			
			For the other direction, consider the set $OPT(k+\tau,N\cup X,\tau)$ and define,
			$$M=OPT(k+\tau,N\cup X,\tau) \cap X$$ 
			Next, observe that carving out an arbitrary set $B$ of size $\tau-|M|$ from $OPT(k+\tau,N\cup X,\tau)-M$ will give us the set $$C=OPT(k+\tau,N\cup X,\tau)-M-B$$ 
			of size $k+\tau-(|M| +\tau-|M|)=k$. Also note that by design, $C\subseteq N$ and hence $f(C)\leq f(OPT(k,N,0))$, but by definition, we have that $g(OPT(k+\tau,N\cup X,\tau))\leq f(C)$. This gives us the other direction and we have $g(OPT(k+\tau,N\cup X,\tau))=f(OPT(k,N,0))$. 
			
			To complete the reduction we need to show how to obtain an $\alpha$-approximate solution to $(k,N,0)$ given an $\alpha$-approximate solution to $(k+\tau,N\cup X,\tau)$. Let $S=S_N \cup S_X$ be such a solution i.e.\ a set of size $k+\tau$ with $S_X=S\cap X$, such that $g(S)\geq \alpha g(OPT(k+\tau,N\cup X,\tau))$. Now consider an arbitrary subset $S'_N$ of $S_N$ of size $\tau-|S_X|$. Observe that $|S_N-S'_N|=|S|-|S_X|-(\tau-|S_X|)=k$ and further $f(S_N-S'_N)\geq g(S)\geq \alpha g(OPT(k+\tau,N\cup X,\tau))=\alpha f(OPT(k,N,0))$, by definition. Hence the set $S_N-S'_N\subseteq N$ is an $\alpha$-approximate solution to $(k,N,0)$ that, given $S$, can be obtained in polynomial time/queries. \qed
			
		\end{proof}
	 \subsection{Tight analysis of Algorithm \ref{a0.387}}	\label{app2} 
	     \begin{thm}\label{0.387}
	   The 0.387-algorithm is $\frac{1}{2}\beta(0.5,\frac{k-2}{k-1})(> 0.387$ asymptotically) approximate.
	   \end{thm}
	   \begin{proof}
	     Let $OPT=g(OPT(k,N,1))$, $A$ be the output of the 0.387-algorithm and $a'_1$ be the first element added to $A$ apart from $a_1$. The case $z=a_1$ is straightforward since $f(A-a_1)\geq \beta(0,1)f(OPT(k-1,N-a_1,0))\geq \beta(0,1)OPT$ where the last inequality follows from Lemma \ref{fng}. So assume $z\neq a_1$. Further, let $f(z|A-a_1-z)=\eta f(A-a_1)$ which implies that $f(a'_1)\geq f(z)\geq f(z|A-a_1-z)= \eta f(A-a_1)$ and now from Lemma \ref{key} with $N$ replaced by $N-a_1$, $A$ replaced by $A-a_1$ and thus $k$ replaced by $k-1$, $S=a'_1$ with $s=1$ and $l=k-1-|S|=k-2$, we get,
	     $$f(A-a_1)\geq \beta(\eta,\frac{k-2}{k-1}) f(OPT(k-1,N-a_1,0))$$
	     This together with Lemma \ref{fng} implies, $f(A-a_1)\geq \beta(\eta,\frac{k-2}{k-1})OPT$.
	     Also, we have by definition,
	     $$f(A-a_1-z)=(1-\eta)f(A-a_1)\geq (1-\eta)\beta(\eta,\frac{k-2}{k-1})OPT$$
	    Further, we have,
	     \begin{eqnarray*}
	     	g(A) &&\geq \max\{ f(a_1), f(A-a_1-z)\}\\
	 &&\geq \max\{ f(z|A-a_1-z), f(A-a_1-z)\}\\
	 && \geq  \max\{\eta \beta(\eta,\frac{k-2}{k-1}), (1-\eta)\beta(\eta,\frac{k-2}{k-1}) \} OPT \\
	 && \geq 0.5\beta(0.5,\frac{k-2}{k-1}) OPT\quad [\text{for } \eta=0.5]\\
	&&\xrightarrow{k\to \infty}   0.387 OPT 
	\end{eqnarray*}
	\end{proof}
	We now give an instance where the above analysis is tight. Let the algorithm start with a maximum value element $a_1$, then pick $a_2$, and then add the set $C$, such that the output of the algorithm is $a_1\cup a_2 \cup C$, with $C$ being a set of size $k-2$. Let $f(a_1)=1, f(a_2)=1, f(C)=1$ with $f(a_1+C)=1, f(a_1+a_2)=2, f(a_2+C)=2$ i.e.\ $C$ copies $a_1$. Hence $f(a_1+a_2+C)=2$ and $g(a_1+a_2+C)=f(a_1+C)=1$. 
	
	Let $OPT(k,N,1)$ include $a_2$, a copy $a'_2$ of $a_2$ (so $f(a'_2)=1, f(a_2+a'_2)=1$) and a set $D$ of $k-2$ elements of value $\frac{1}{(k-2)\beta(0,1)}$ each, such that $f(OPT(k,N,1))=1+(k-2)\frac{1}{(k-2)\beta(0,1)}=1+\frac{e}{e-1}=\frac{2}{\beta(0.5,1)}$. Observe that the small value elements are all minimizers and $g(OPT(k,N,1))\approx \frac{2}{\beta(0.5,1)}$ as $k$ becomes large. Note that $f(D)=\frac{f(C)}{\beta(0,1)} $ and we can have sets $C$ and $D$ as above based on the worst case example for the greedy algorithm given in \cite{nem}.This 
	proves that the inequality in Lemma \ref{key} is tight.
	
		  \subsection{Analysis of Algorithm \ref{a0.55}}\label{app3} 
		  \begin{thm}\label{0.55}
		  	Algorithm \ref{a0.55} is $0.5547-\Omega(1/k)$ approximate.
		  \end{thm}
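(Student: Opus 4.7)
The plan is to bound $g(A)/g(OPT(k,N,1))$ via case analysis on a minimizer $z$ of $A$ and on which phase of Algorithm~\ref{a0.55} terminates the execution. Throughout, let $\eta:=f(z|A-z)/f(A)\in[0,1]$, so that $g(A)=(1-\eta)f(A)$, and use Lemma~\ref{fng} to reduce the target to lower bounding $(1-\eta)f(A)/f(OPT(k-1,N,0))$.

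Two structural facts will be the workhorses. First, there is a persistence property for the phase thresholds: $f(a_i|A-a_i)$ is non-increasing as elements are added to $A$ while $f(A)/3$ is non-decreasing, so once a phase exits via its threshold condition, the inequality $f(a_i|A-a_i)\leq f(A)/3$ is preserved through all subsequent additions. In particular, whenever Phase~3 is entered, both $f(a_i|A-a_i)\leq f(A)/3$ for $i=1,2$ hold for the final $A$. Second, Phase~1 is exactly the standard greedy algorithm applied to $f$ restricted to $N\setminus\{a_1\}$ starting from $\{a_2\}$, and analogously for Phase~2 with $a_2$; this allows a direct invocation of Corollary~\ref{init} to bound $f(A_1-a_1)$ and $f(A_2-a_2)$ as fractions of $f(OPT(k-1,N,0))$.

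The case analysis proceeds on $z$. If $z=a_1$, either Phase~1 exits via its threshold (giving $\eta\leq 1/3$ and hence $f(A-a_1)\geq(2/3)f(A)$) or Phase~1 runs to $|A|=k$, in which case $A-a_1$ is a full $(k-2)$-step greedy sequence on $N\setminus\{a_1\}$ from $\{a_2\}$, so Corollary~\ref{init} immediately yields $g(A)=f(A-a_1)\geq(1-1/e-O(1/k))f(OPT(k-1,N,0))$. The case $z=a_2$ is handled symmetrically via Phase~2, using in addition the persistent bound on $a_1$'s marginal to control $f(A)$. If $z\notin\{a_1,a_2\}$, then by greedy optimality of $a_1,a_2$ and submodularity $f(\{a_1,a_2\})=f(a_1)+f(a_2|a_1)\geq 2f(z|A-z)=2\eta f(A)$; in this case Phase~3 must have been entered, both persistent threshold bounds hold, and I would obtain the lower bound on $f(A)$ by applying Corollary~\ref{init} to whichever phase produced the bulk of the greedy-style progress and then invoking a Lemma~\ref{key}-type concentration argument with $c=2\eta$.

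The main obstacle is the tight subcase $z\in\{a_1,a_2\}$ with $\eta$ near $1/3$ and the corresponding phase having terminated via its threshold after only a few iterations: here one cannot rely on the Phase~1 (or Phase~2) greedy bound in isolation because $n_1$ (resp.\ $n_2$) may be small. Instead, the persistent threshold inequality must be combined with the greedy progress of the subsequent phase, accounting for the fact that Phase~2 (resp.\ Phase~3) additions are not standard greedy on $f$ but rather greedy w.r.t.\ $f(\cdot\,|\,A-a_2)$ (resp.\ $f(\cdot\,|\,A)$), and carefully tracking how the gap between the shifted marginal and the actual marginal on $A$ propagates. After balancing these contributions and verifying that the other cases are not tighter, the worst-case ratio over admissible $\eta$ and phase splits produces the claimed $0.5547-\Omega(1/k)$ constant.
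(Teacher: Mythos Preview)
Your high-level plan (case split on the minimizer, persistence of the threshold inequalities, and a Lemma~\ref{key}-style endgame) is reasonable, but the proposal omits the one calculation that actually produces the constant $0.5547$, and the concrete step you do spell out cannot be carried out as stated.

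The binding case is exactly the one you flag as the main obstacle: $z\in A\setminus W$ (in particular $z\in\{a_1,a_2\}$) with Phase~3 present but Phases~1/2 contributing nontrivially. Here you cannot apply Lemma~\ref{key} to the whole run, because Phases~1 and~2 are \emph{not} standard greedy on $f$---they add greedily w.r.t.\ $f(\cdot\mid A-a_1)$ and $f(\cdot\mid A-a_2)$---so the hypothesis of Lemma~\ref{key} fails for the full sequence. The paper instead writes
\[
f(A-z)=f(A-W-z)+f(W\mid A-W-z),
\]
bounds the Phase~3 term via Corollary~\ref{init} by $\beta(0,\alpha_r)\bigl(OPT-f(A-W)\bigr)$, and then uses the phase-exit threshold to obtain the crucial relation $f(A-W)\le\tfrac{3}{2}\,f(A-W-z)$. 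This yields
\[
f(A-z)\ \ge\ \Bigl(1-\tfrac{3}{2}\beta(0,\alpha_r)\Bigr)f(A-W-z)+\beta(0,\alpha_r)\,OPT,
\]
and now $f(A-W-z)$---with the bad element $z$ removed---is lower bounded by $\beta(0,\alpha_p+\alpha_q)\,OPT$ from the Phase~1/2 analysis. Optimizing over the split $\alpha_r$ versus $\alpha_p+\alpha_q$ gives $0.5-\tfrac{3}{2e}+e^{-1/2}\approx 0.5547$. Your proposal contains neither the $\tfrac{3}{2}$ relation nor this two-stage decomposition, and without them there is no route to the specific constant; ``balancing contributions'' is not a substitute.

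For your subcase $z\notin\{a_1,a_2\}$, the Lemma~\ref{key} argument with $c=2\eta$ can only be applied to the Phase~3 portion (the sole standard-greedy phase), so it yields $f(A)\ge\beta(2\eta,r/(k-1))\,OPT$ with $r=|W|$, which is vacuous when Phase~3 is short. The paper handles this via the same two-stage decomposition above whenever $|U|+|V|\ge 6$; only when $|U|+|V|<6$ (hence $r\ge k-8$) does it fall back on Lemma~\ref{key}, and even then with $c=3\eta$ rather than $2\eta$, obtained by also placing the first Phase~3 element $w_1$ in the initial set $S$.
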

		  \begin{proof} 
		  	Let $A$ denote the output and $A_0\subset A$ denote $\{a_1,a_2\}$. Due to submodularity, there exists at most two distinct $x\in A$ with $f(x|A-x)>\frac{f(A)}{3}$. Additionally, for every $x\not\in A_0$, we have that $f(x|S)\leq f(a_1)$ and $f(x|S)\leq f(a_2|a_1)$ for arbitrary subset $S$ of $A$ containing $A_0$ and $x\not\in S$. This implies that that $2f(x|S)\leq f(A_0)\leq f(S)$, which gives us that $f(x|S)\leq \frac{f(S+x)}{3}$.
		  	
		  	Note that due to condition in Phase 1, the algorithm ignores $a_1$ even if it is not a minimizer, as long as its marginal is more than a third the value of the set at that iteration. At the end of Phase 1, if $a_2$ has marginal more than third of the set value, then it is ignored until its contribution/marginal decreases. Phase 3 adds greedily (without ignoring any element added). As argued above, no element other than $a_1,a_2$ can have marginal more than a third of the set value at any iteration, so during Phase 2 we have that $a_2$ is also a minimizer. 
		  	
		  	We will now proceed by splitting into cases. Denote $g(OPT(k,N,1))$ as $OPT$ and recall from Lemma \ref{fng}, $OPT\leq f(OPT(k-1,N-a_i,0))$ for $i\in\{1,2\}$. Also, let the set of elements added to $A_0$ during Phase 1 be $U=\{u_1,\dots,u_p\}$, similarly elements added during Phases 2 and 3 be $V=\{v_1,\dots,v_q\},$$ W=\{w_1,\dots,w_r\}$ respectively, with indexing in order of addition to the set. Finally, let $\alpha_p=\frac{p-2}{k-1}\, ,\, \alpha_q=\frac{q-1}{k-1}\, ,\, \alpha_r=\frac{r}{k-1}\, ,\, \alpha=\alpha_p+\alpha_q+\alpha_r=\frac{k-5}{k-1}$, and assume $k\geq 8$.
		  	\\
		  	
		  	\textbf{Case 1:} Phase 2,3 do not occur i.e.\ $p=k-2, q=r=0$.
		
		   Since we have,
		   \begin{eqnarray}
		   	f(A-a_1)&&\overset{(a)}{\geq} f(a_2) + \beta\Big(0,\frac{k-2}{k-1}\Big)(f(OPT(k-1,N-a_1,0))-f(a_2))\notag\\
		   	&&\geq \beta\Big(0,\frac{k-2}{k-1}\Big)f(OPT(k-1,N-a_1,0))\notag\\
		   	&& \overset{(b)}{\geq} \beta\Big(0,\frac{k-2}{k-1}\Big)OPT \label{a-a1},
		   \end{eqnarray} 
		   where $(a)$ follows from Lemma \ref{nem} and $(b)$ from Lemma \ref{fng}.  
		   This deals with the case $z=a_1$. If $z=u_p$, we have, 
		     \begin{eqnarray*}
		     	f(A-u_p)\geq f(A-u_p-a_1)&&\overset{(c)}{\geq} \beta\Big(0,\frac{k-3}{k-1}\Big) f(OPT(k-1,N-a_1,0))\\
		     	&&\geq \beta \Big(0,\frac{k-3}{k-1}\Big)OPT,
		     \end{eqnarray*}
		  		where $(c)$ is like (\ref{a-a1}) above but with $k-2$ replaced by $k-3$ when using Lemma \ref{nem}. 
		  	 Finally, let $z\not\in \{a_1,u_p\}$, then due to the Phase 1 termination criteria, we have $f(a_1|A-a_1-u_p)\geq f(A-u_p)/3$,  which implies that,
		  		\begin{eqnarray}
		  		2 f(a_1|A-a_1-u_p)&&\geq f(A-a_1-u_p)\label{add}
		  		\end{eqnarray}
		  		Now letting $\eta=\frac{f(z|A-a_1-u_p-z)}{f(A-a_1-u_p)}$, we have by submodularity $f(z|A-z)\leq \eta f(A-a_1-u_p)$ and by definition $f(A-a_1-u_p-z)=(1-\eta)f(A-a_1-u_p)$. Using the above we get, 
		  		\begin{eqnarray}
		  		&&f(A-z)\overset{(d)}{\geq} \max\{f(a_1), f(A-u_p-z)\}\notag\\
		  		&&\geq \max\{f(z|A-a_1-u_p-z), f(A-a_1-u_p-z)+ f(a_1|A-a_1-u_p-z)\}\notag\\
		  		&&\geq \max\{\eta f(A-a_1-u_p), (1-\eta)f(A-a_1-u_p)+ f(a_1|A-a_1-u_p)\}\notag\\
		  		&&\geq \max\{\eta, (1-\eta)+ \frac{1}{2}\} f(A-a_1-u_p) \label{semi}
		  		\end{eqnarray}
		  		where $(d)$ follows from monotonicity and the fact that $a_1\in A-z$ and $A-u_p-z\subset A-z$.
		  		Now, from Lemma \ref{key} with $S=\{a_2,u_1\}$, $l=p-2$, $k$ replaced by $k-1$, $N$ by $N-a_1$ and $s=1$, we have, $f(A-a_1-u_p)\geq \beta(\eta,\alpha_p)f(OPT(k-1,N-a_1,0))\geq \beta(\eta,\alpha_p)OPT$. Substituting this in (\ref{semi}) above we get,
		  		\begin{eqnarray}
		  		f(A-z)&&\max\{\eta, \frac{3}{2}-\eta\}\beta(\eta,\alpha_p) OPT  \notag\\
		  		 		&&\geq \frac{3}{4}\beta\Big(\frac{3}{4},\alpha_p\Big) OPT\quad [\eta=3/4]\notag\\
		  				&&> \beta(0,\alpha_p) OPT=\beta\Big(0,\frac{k-4}{k-1}\Big) OPT\label{case1}
		  	\end{eqnarray}
		  	\textbf{Case 2:} Phase 2 occurs, 3 doesn't i.e.\ $p+q=k-2$ and $q>0$.
		
		  	As stated earlier, during Phase 2, $a_2$ is the minimizer of $A_0\cup U\cup (V-v_q)$.
		  	We have $g(A)\geq g(A-v_q)= f(A-v_q-a_2)=f(a_1+U) + f(V-v_q|a_1+U)$. Further, since the addition rule in Phase 2 ignores $a_2$, we have from Lemma \ref{nem}, $f(V-v_q|a_1+U)\geq \beta(0,\alpha_q)(OPT-f(a_1+U))$, and $f(a_1+U)\geq \beta(0,\alpha_p)OPT$ follows from the previous case (to see this, suppose that the algorithm was terminated after Phase 1 and note that $z=a_2$ falls under the scenario $z\not\in\{a_1,u_p\}$). Using this,
		  	\begin{eqnarray}
		  		g(A-v_q)&&\geq f(a_1+U)+\beta(0,\alpha_q)(OPT-f(a_1+U))\notag\\
		  		\frac{f(A-z)}{OPT}&&\geq (1-\beta(0,\alpha_q))\beta(0,\alpha_p) +\beta(0,\alpha_q)\notag\\
		  		&&= \beta(0,\alpha)= \beta\Big(0,\frac{k-5}{k-1}\Big)\label{case2}
		  	\end{eqnarray}
		  	\textbf{Case 3:} Phase 3 occurs i.e., $r>0$.
		  	
		  	We consider two sub-cases, $z\in A-W$ and $z\in W$. Suppose $z\in A-W$. Due to $f(z|A-W-z)\leq f(A-W)/3$ we have, $f(A-W)\leq\frac{3}{2}f(A-W-z)$.
		  		Also, $f(W|A-W-z)\geq f(W|A-W)\geq\beta(0,\alpha_r)(OPT-f(A-W))$. Then using this along with the previous cases,
		  		\begin{eqnarray*}
		  			f(A-z)&&= f(A-W-z)+ f(W|A-W-z)\\
		  			&&\geq f(A-W-z)+ \beta(0,\alpha_r)(OPT-f(A-W))\\
		  			&&\geq f(A-W-z)+ \beta(0,\alpha_r)(OPT-\frac{3}{2}f(A-W-z))\\
		  			&&\geq (1-\frac{3}{2}\beta(0,\alpha_r))f(A-W-z) + \beta(0,\alpha_r)OPT\\
		  			&&\geq (1-\frac{3}{2}\beta(0,\alpha_r))\beta(0,\alpha_p+\alpha_q) OPT + \beta(0,\alpha_r)OPT\quad [\text{from (\ref{case1}),(\ref{case2})}]\\
		  			\frac{f(A-z)}{OPT}
		  			&&\geq 0.5-\frac{3}{2e^{\alpha}} + \frac{1}{2} (e^{-(\alpha_p+\alpha_q)} +e^{-\alpha_r})\\
		  			&&\geq 0.5-\frac{3}{2e^\alpha}+e^{-\alpha/2}\quad [\text{for } \alpha_r=\alpha_p+\alpha_q=\alpha/2]\\
		  			&&= 0.5- \frac{3}{2e^\frac{k-4}{k-1}}+e^{-\frac{k-4}{2(k-1)}}\xrightarrow{k\to \infty} 0.5547
		  		\end{eqnarray*}
		  
		  		Now, suppose $z\in W$, then note that for $p+q\geq 6$ we have either $p\geq 3$, and hence due to greedy additions $f(z|A-z)\leq f(\{u_1,u_2,u_3\})/3\leq f(A-W)/3$, or $q\geq 3$, and again due to greedy additions $f(z|A-z)\leq f(a_1+U\cup \{v_1,v_2\})/3\leq f(A-W)/3$. 
		  		
		  		If $q>0$, then note that $f(A-W)\geq f(A-W-v_q)\geq\frac{3}{2} f(A-W-v_q-a_2)$ due to the Phase 2 termination conditions. Now we reduce the analysis to look like the previous sub-case through the following,
		  		\begin{eqnarray*}
		  			f(A-z)&&= f(A-W)+f(W|A-W)-f(z|A-z)\\
		  			&&\geq f(A-W)+ \beta(0,\alpha_r) (OPT-f(A-W)) -f(z|A-z)\\
		  			&&\geq (1-\beta(0,\alpha_r))f(A-W) +\beta(0,\alpha_r)OPT -f(A-W)/3\\
		  			&&\geq \Big(1-\frac{3}{2}\beta(0,\alpha_r)\Big) \frac{2}{3}f(A-W) +\beta(0,\alpha_r)OPT\\
		  			&&\geq \Big(1-\frac{3}{2}\beta(0,\alpha_r)\Big)f(A-W-v_q-a_2) +\beta(0,\alpha_r)OPT\\
		  		\end{eqnarray*}
		  		Which since $f(A-W-v_q-a_2)\geq \beta\Big(0,\frac{p+q-3}{k-1}\Big) OPT$ from (\ref{case2}), leads to the same ratio asymptotically as when $z\in A-W$. The case $q=0$ can be dealt with similarly by using $f(A-W)\geq f(A-W-u_p)\geq\frac{3}{2} f(A-W-u_p-a_1)$
		  		
		  		If $p+q<6$, then let $f(z|A-z)=\eta f(A)$. Now we have,
		  		$f(A-W)\geq f(A_0)=f(a_1)+f(a_2|a_1)\geq 2f(z|A-z)= 2\eta f(A)$,
		  		which further implies that $f(A-W+w_1)\geq 3 \eta f(A)$ since $z\in W$. Then proceeding as in Lemma \ref{key} with $k$ replaced by $k-1$, $S=A-W+w_1$ and hence $s=3$ and finally $l=k-|S|=k-(p+q+2+1)\leq k-8$ gives us, 
		  		$$f(A)\geq\beta(3\eta,\frac{k-8}{k-1})f(OPT(k-1,N,0))\geq \beta(3\eta,\frac{k-8}{k-1}) OPT$$
		  		Then using Lemma \ref{subkey} we have, $f(A-z)\geq(1-\eta)\beta(3\eta,\frac{k-8}{k-1})OPT \geq \beta(0,\frac{k-8}{k-1})OPT$. \qed
		  
		 \end{proof}

\end{document}